\documentclass[a4paper,11pt]{article}
\pdfoutput=1 

\usepackage{jheppub} 

\usepackage[T1]{fontenc} 
\usepackage{subcaption}
\usepackage{verbatim}
\usepackage{mathrsfs}
\usepackage{physics}
\usepackage{amsthm}
\usepackage{mathtools}
\usepackage{enumerate}


\renewcommand{\hat}{\widehat}
\renewcommand{\bar}{\overline}

\newcommand{\comps}{\mathbb{C}}

\newcommand{\A}{\mathcal{A}}
\newcommand{\M}{\mathcal{M}}
\renewcommand{\H}{\mathcal{H}}

\newcommand{\im}{\operatorname{im}}

\newcommand{\B}{\mathcal{B}}

\newcommand{\Z}{\mathcal{Z}}

\newcommand{\DGNS}{D_{\text{GNS}}}
\newtheorem{theorem}{Theorem}
\newtheorem{corollary}[theorem]{Corollary}
\newtheorem{lemma}[theorem]{Lemma}

\theoremstyle{definition}

\numberwithin{theorem}{section}

\title{Continuum canonical purifications}
\author{Jonathan Sorce}
\affiliation{Princeton Gravity Initiative}
\abstract{
We construct and characterize canonical purifications for general algebraic states, extending prior constructions by Woronowicz and by Dutta/Faulkner to general quantum field theories.
Given a quantum state on a $*$-algebra, the canonical purification is a state on a ``doubled'' algebra that admits an interpretation in terms of CRT reflection.
This interpretation holds for all quantum theories, even in the absence of gravity.
We then identify conditions under which canonical purifications are ``pure'' in the technical sense, compute their modular conjugations, and relate them to GNS and natural-cone purifications in certain settings.
In an appendix, we develop a general theory of von Neumann algebras generated by unbounded $*$-algebras.
In a forthcoming paper with Caminiti and Capeccia, we provide an application of this general theory to the problem of excitability in quantum field theory.
}

\begin{document}
\maketitle

\section{Introduction}

In most applications of quantum information theory, one starts with a physically interesting pure state and studies the entanglement structure of its subsystems.
When studying such a subsystem, it can be helpful to remove the privileged status of the original pure state by studying the full set of allowable purifications.
This gives rise, for example, to Uhlmann's characterization of the quantum fidelity \cite{Uhlmann:fidelity} and to the entanglement of purification \cite{Terhal:EOP}.

For a density matrix $\rho$ on a finite-dimensional Hilbert space $\H$, a \textit{canonical} purification is provided by thinking of the operator $\sqrt{\rho}$ as a state in the Hilbert space $\H\otimes \H^*$, where $\H^*$ is the dual space to $\H$.
This construction has been used in quantum information theory since at least \cite{Hughston:classification}, but has become newly popular in recent years due to applications in high energy physics.
In \cite{Dutta:canonical}, building on earlier work in \cite{Maldacena:TFD, EW:1, EW:2}, it was argued that for a geometric state in a holographic quantum field theory, the canonical purification has a dual gravitational description in terms of a ``CRT-sewing'' of two copies of the entanglement wedge.
This inspired many interesting investigations into canonical purifications in quantum gravity and in quantum field theory more generally; for an incomplete list, see \cite{Bousso:shocks, Marolf:sewing, Bueno:fermions, Bueno:scalars, Zou:universal, Engelhardt:simple, Hayden:Markov, Akers:random-1, Engelhardt:evaporating, Akers:Page, Akers:random-2, Dutta:fermions, Parrikar:shock, Engelhardt:transfer}.

Despite these successes, most investigations mentioned above use the language of density matrices and regularized, tensor-factorizing Hilbert spaces.
This limits our ability to apply canonical purifications to problems that are most readily addressed in the continuum.
One step in this direction was made in \cite[appendix A]{Dutta:canonical} by relating canonical purifications of density matrices to the GNS construction, and there is no essential subtlety in generalizing that appendix to continuous, faithful states on C$^*$ algebras.
But there are many important quantum states that are not faithful, and many important settings in which quantum theory is not intrinsically described by C$^*$ algebras; these include quantum field theory in general curved backgrounds \cite{Hollands:review} and potentially closed-universe quantum gravity \cite{Witten:background}.

The purpose of this paper and its companion paper \cite{Caminiti:paper-2} is to elaborate the general theory of canonical purifications, and to provide a concrete application.
In fact, part of this goal was already accomplished by Woronowicz in 1972 \cite{Woronowicz:purification}, though Woronowicz's paper has been underappreciated in the literature.
In \cite{Woronowicz:purification}, Woronowicz constructed a general theory of canonical purifications for so-called ``factor'' states on C$^*$ algebras.
This is partially more general than the construction in \cite{Dutta:canonical}, as it relaxes the assumption of faithfulness; however it requires making an additional assumption of factoriality.
It also remains firmly in the realm of C$^*$ algebras, which --- as discussed in the preceding paragraph --- are not appropriate for describing general, interacting quantum field theories.

In order to make the theory of canonical purifications applicable in complete generality, the present paper develops a general theory of canonical purifications for quantum states on $*$-algebras,\footnote{We review in section \ref{sec:general-algebras} why this is an important general framework for thinking about physics.} and is essentially an expansion of \cite{Woronowicz:purification, Dutta:canonical} to a broader class of physical settings.
We also develop a general theory of what it means to construct a von Neumann algebra out of an abstract $*$-algebra, which does not seem to have appeared in the literature before.
The companion paper \cite{Caminiti:paper-2}, co-authored with Caminiti and Capeccia, makes a connection between canonical purifications and the ``symplectic purifications'' that have been used in algebraic free field theory since \cite{Powers:quasi}.
That work then uses the general theory of canonical purifications to simplify and generalize the main theorems of \cite{Powers:quasi, Araki:quasi, VanDaele:quasi, Araki-Yamagami}.

Concretely, given any $*$-algebra $\A_0$ and any algebraic quantum state $\omega,$\footnote{We do not actually consider \textit{any} state $\omega$; we assume the GNS representation has a countable basis, which should be the case for algebras and states describing local quantum physics.} we construct and characterize an algebraic state $\hat{\omega}$ that has the interpretation of ``two copies of $\omega$ sewed together after time-reversal.''
See figure \ref{fig:time-reversal}.
The state $\hat{\omega}$ is defined on the $*$-algebra $\A_0 \otimes \A_0^{\text{op}},$ where the ``opposite algebra'' $\A_{0}^{\text{op}}$ has the same elements as $\A_0$ but a reversed multiplication rule.
The reversal of multiplication ordering plays the role of time reversal, since in canonical quantization, reversing the order of multiplication is the same as changing the sign of momentum.
The general formula, which will be explained in much more detail in section \ref{sec:intuition}, is\footnote{For the special case of a C$^*$ algebra, this map is the one considered by Woronowicz in \cite{Woronowicz:purification}.}
\begin{equation} \label{eq:intro-main-formula}
	\hat{\omega}(a \otimes b)
		= \langle a^\dagger \omega | J_{\omega} |b^{\dagger} \omega\rangle.
\end{equation}
The right-hand side is evaluated in the GNS representation of $\omega$, and $J_{\omega}$ is the corresponding modular conjugation.
From this formula, one can show that $\hat{\omega}$ reduces to $\omega$ on either subsystem:
\begin{equation}
	\hat{\omega}(a \otimes 1) = \hat{\omega}(1 \otimes a) = \omega(a).
\end{equation}

\begin{figure}
	\centering
	\includegraphics[scale=1.4]{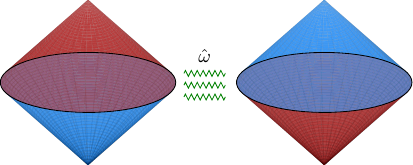}
	\caption{The canonical purification is an entangled algebraic state, $\hat{\omega},$ defined on two copies of the same degrees of freedom, with one copy subject to time reversal.
	In the language of the text, the left causal diamond contains the degrees of freedom $\A_0$, while the right causal diamond contains the degrees of freedom $\A_0^{\text{op}}$.}
	\label{fig:time-reversal}
\end{figure}

We review in section \ref{sec:intuition} how equation \eqref{eq:intro-main-formula} reproduces the density-matrix ket $|\sqrt{\rho}\rangle$ in the finite-dimensional setting.
In the case that $\omega$ is a faithful, continuous state on a von Neumann algebra, we also demonstrate the existence of a canonical isomorphism between the GNS spaces $\H_{\hat{\omega}}$ and $\H_{\omega},$ which essentially uses the modular conjugation to identify the ``purifying system'' in $\H_{\omega}$ with the time-reversed algebra $\A_0^{\text{op}}.$
This conceptual shift provides the key utility of the canonical purification: one starts with the GNS purification, in which the purifying degrees of freedom do not come with any intrinsic structure, then collapses these purifying degrees of freedom into a time-reversed copy of the original degrees of freedom.
See figure \ref{fig:GNS-crunch}.
Physical arguments involving the canonical purification are easier to pursue due to this additional structure.

\begin{figure}
	\centering
	\includegraphics[scale=1.4]{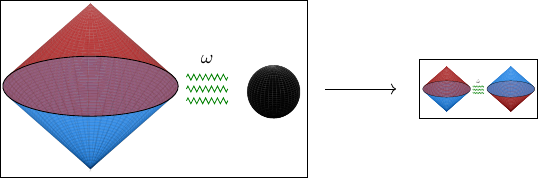}
	\caption{The GNS purification introduces entangled degrees of freedom with no intrinsic structure; this is represented by the left side of the above figure.
		In the case that $\omega$ is faithful, the canonical purification arises by using the modular conjugation to identify the new degrees of freedom with a second copy of the original ones.}
	\label{fig:GNS-crunch}
\end{figure}

We then extend the canonical purification formula beyond the regime of faithful, continuous states on von Neumann algebras. 
The relaxation of the faithfulness condition makes the canonical purification distinct from the GNS purification, resulting in an embedding $\H_{\omega} \hookrightarrow \H_{\hat{\omega}}.$
For example, a pure state $|\psi\rangle \in \mathbb{C}^d$ is not faithful on the matrix algebra $\operatorname{Mat}_{d \times d}(\mathbb{C}),$ and in this setting the GNS construction returns the state $|\psi\rangle,$ while the canonical purification returns $|\psi\rangle \otimes |\psi\rangle.$
The relaxation of continuity and boundedness assumptions allows canonical purification technology to be applied to general interacting quantum field theories, where the fundamental degrees of freedom cannot a priori be organized into von Neumann algebras.
We prove that in the general unbounded, non-faithful setting, $\hat{\omega}$ is still always a positive linear functional (i.e., it is an algebraic state), and we show that the time-reversal is necessary for positivity to hold.

Next, we investigate the conditions under which $\hat{\omega}$ is pure.
We establish purity in the case that $\omega$ is ``factorial,'' which is the setting in which the von Neumann algebra associated with $\A_0$ has a trivial center.
There is an interesting caveat to this discussion, which is that for unbounded $*$-algebras there are two senses in which $\hat{\omega}$ could be pure --- a ``strong purity,'' which is the statement that $\hat{\omega}$ cannot be nontrivially decomposed as a sum of two algebraic states, and a ``weak purity,'' which is the statement that the GNS representation of $\hat{\omega}$ introduces no nontrivial purifying system.
The difference between these is explained in appendix \ref{app:purity-theorem}; the result of this paper is that $\hat{\omega}$ is weakly pure in the factorial case, though it may not always be pure in the strong sense.
In the general setting, we show that $\omega$ is ``centrally pure,'' a term we introduce for the property $\A_{\hat{\omega}}' \subseteq \A_{\hat{\omega}}$ --- this means that while the GNS representation may introduce new, commuting degrees of freedom, they can all be obtained as limits of the degrees of freedom we start with.\footnote{Lauritz van Luijk has informed us that using disintegration theory, one can show that this is equivalent to the statement that $\A_{\hat{\omega}}$ is a direct integral of type I factors with trivial multiplicity, meaning that each factor appearing in the direct integral is of the form $\B(\H)$ for some Hilbert space $\H$.}

Finally, we compute the modular conjugation operator associated with $\hat{\omega}$ in the case where $|\omega\rangle$ is separating for the von Neumann algebra $\A_{\omega}$ generated by $\A_0.$
The formula is
\begin{equation} \label{eq:intro-J}
	J_{\hat{\omega}} (a \otimes b) |\hat{\omega}\rangle = (b^* \otimes a^*) |\hat{\omega}\rangle.
\end{equation}
We use this to establish a connection between the ``time-reflection'' canonical purification and the natural-cone purification that has occasionally been called canonical.
Concretely, whenever two algebraic states $\omega_1$ and $\omega_2$ admit a unitary equivalence between their GNS spaces $\H_{\hat{\omega}_1}$ and $\H_{\hat{\omega}_2}$, we show that the representative of $\hat{\omega}_1$ in $\H_{\hat{\omega}_2}$ can be brought into the natural cone of $|\hat{\omega}_2\rangle$ by acting with a unitary in the center of the von Neumann algebra generated by $(\A_0 \otimes 1)$.

Much of the technical material in the paper has been relegated to two appendices, and we have tried to make the main text readable to a broad audience.
However it will be assumed that the reader is familiar with basic notions concerning von Neumann algebras and modular theory, such as the double commutant theorem, the Tomita operator, etc.
Readers who need a refresher are encouraged to consult the reviews in \cite{Witten:notes, Sorce:types, Sorce:modular}.

The plan of the paper follows.

In section \ref{sec:intuition}, we review the structure of the canonical purification in finite-dimensional systems, rewrite everything in terms of modular quantities, and use this to define the canonical purification for any algebraic state on a general $*$-algebra.
Expanding on \cite{Dutta:canonical}, we explain the interpretation of the canonical purification in the separating, bounded case as a procedure by which the purifying GNS degrees of freedom are canonically identified with the time-reversal of the original system.
We also give an overview of the work in \cite{Woronowicz:purification} and the issues that arise when generalizing this discussion to states on a $*$-algebra, which are then resolved by the rest of the paper.

In section \ref{sec:faithful}, we establish the basic theory of canonical purifications in the case that $\omega$ is separating but the representation of $\A_0$ is unbounded: we show that $\hat{\omega}$ is positive as a functional, establish the isomorphism of GNS spaces $\H_{\omega} \cong \H_{\hat{\omega}},$ derive equation \eqref{eq:intro-J}, and show that when $\omega$ is factorial, the functional $\hat{\omega}$ is pure in the sense that the GNS space has no nontrivial purifying system.
We also characterize the structure of the purifying system in the non-factorial case.

In section \ref{sec:non-separating}, we extend the theory to the case of non-separating states.
We provide no computation of the modular conjugation in this setting, but we still establish positivity of $\hat{\omega}$, an embedding of GNS spaces $\H_{\omega} \hookrightarrow \H_{\hat{\omega}},$ and purity in the factorial case. 

Finally, in section \ref{sec:comparison}, we explain the connection between canonical and natural-cone purifications in the case of a unitary equivalence of sectors.

All Hilbert spaces, in particular the Hilbert spaces associated to GNS representations, are assumed to admit a countable orthonormal basis.

\section{Finite dimensions and beyond}
\label{sec:intuition}

\subsection{Finite-dimensional formulas}
\label{sec:finite-dimension}

Let $\H$ be a finite-dimensional Hilbert space, with $\B(\H)$ the algebra of linear operators acting on $\H$.
Consider a density matrix $\rho$ with diagonal form
\begin{equation}
	\rho = \sum_{j} p_j \ketbra{j}.
\end{equation}
This density matrix defines an expectation value functional on $\B(\H)$ via the map
\begin{equation}
	\omega(a)
		= \tr(\rho a).
\end{equation}
A purification of $\rho$ can be defined by taking a second copy of $\H$ and writing
\begin{equation}
	|\sqrt{\rho}\rangle = \sum_{j} \sqrt{p_j} |j\rangle |j\rangle.
\end{equation}
This formula involves a choice of identification between the first and second copies of $\H$, so a slightly more ``canonical'' formulation involves thinking of $|\sqrt{\rho}\rangle$ as a state on the Hilbert space $\H \otimes \H^*,$ with $\H^*$ the dual space to $\H$.
Since $\H \otimes \H^*$ is the space of operators acting on $\H$, the ``ket'' $|\sqrt{\rho}\rangle$ is the same as the operator $\sqrt{\rho}.$

Every operator that acts on $\H$ can also be made to act on $\H^*$ via the formula
\begin{equation} \label{eq:right-action}
	a \cdot \langle \psi |
		= \langle \psi| a = \langle a^{\dagger} \psi|.
\end{equation}
Note that equation \eqref{eq:right-action} furnishes a \textit{right} action of $\B(\H)$ on $\H^*.$
Equivalently, it is a left action of $\B(\H)^\text{op}$, which is the algebra with the same elements as $\B(\H)$, but with a reversed multiplication rule.
So the space of all linear operators on $\H\otimes \H^*$ is naturally identified with $\B(\H) \otimes \B(\H)^{\text{op}},$ and if $O$ is an operator on $\H$ interpreted as a vector in $\H \otimes \H^*$, then we have
\begin{equation}
	(a \otimes b)|O\rangle
		= |a O b\rangle.
\end{equation}
The ket $|\sqrt{\rho}\rangle$ gives a set of expectation values for $\B(\H) \otimes \B(\H)^{\text{op}}$ via the linear functional
\begin{equation} \label{eq:rho-hat}
	\hat{\omega}(a \otimes b)
		= \langle \sqrt{\rho}| (a \otimes b) |\sqrt{\rho}\rangle.
\end{equation}

Our goal is to generalize equation \eqref{eq:rho-hat} to states that are not represented by density matrices, and to $*$-algebras that are not of the form $\B(\H)$ for some Hilbert space $\H$.
To do this, we will review from \cite[appendix A]{Dutta:canonical} the connection between equation \eqref{eq:rho-hat} and the modular quantities associated with the GNS representation of $\rho.$
We will then produce a general definition using the modular version of equation \eqref{eq:rho-hat}.

A complete explanation of the GNS construction can be found in appendix \ref{app:GNS}.
The idea is to endow the vector space $\B(\H)$ with an inner product
\begin{equation}
	\langle a | b \rangle_{\text{GNS}}
		= \omega(a^{\dagger} b),
\end{equation}
then to take a quotient by the null states of this inner product.
The resulting Hilbert space is called the ``GNS space'' $\H_{\omega}$.
The state associated with the identity operator is usually just called $|\omega\rangle$, and the state that was labeled $|a\rangle_{\text{GNS}}$ is now written as $|a \omega\rangle = a |\omega\rangle.$
There is a natural map from the GNS space to $\H\otimes \H^*$ defined by
\begin{equation}
	V : a |\omega\rangle \mapsto (a \otimes 1) |\sqrt{\rho}\rangle = |a \sqrt{\rho}\rangle.
\end{equation}
This is an isometry, but it is not always a unitary map.
If $\rho$ is not full rank, then the GNS construction involves a nontrivial quotient, and the isometry $V$ maps onto a subspace of $\H\otimes \H^*.$
For example, if one starts with a pure density matrix $\rho = \ketbra{\psi},$ then one has $V \H_{\omega} = \H \otimes \bra{\psi}.$
In general, if we let $P$ be the projection onto the support of $\rho,$ then the image of $V$ is $\H \otimes P \H^*.$

Despite the potential non-isomorphism between $\H_{\omega}$ and $\H \otimes \H^*$, one can still always express the expectation value functional \eqref{eq:rho-hat} purely in terms of operators acting on the GNS representation $\H_{\omega}$.
The reason is that the modular operator $\Delta_{\omega}$ on the GNS representation transforms nicely under $V$.
If one defines the inverse density matrix $\rho^{-1}$ to vanish on the complement of the support of $\rho,$ then one has
\begin{equation} \label{eq:V-transport-of-Delta}
	V \Delta_{\omega} V^{\dagger}
		= \rho \otimes \rho^{-1}.
\end{equation}
For a calculation of this formula in the full-rank case, see \cite[section 4.1]{Witten:notes}.
In the current setting of a non-full-rank density matrix, the modular operator is still defined as in appendix \ref{app:modular}, and from the formula
\begin{equation}
	\langle a \omega | \Delta_{\omega} | b \omega \rangle
		= \langle b^{\dagger} \omega | P | a^{\dagger} \omega\rangle,
\end{equation}
one can deduce equation \eqref{eq:V-transport-of-Delta}.
To compare this to an expression involving $|\sqrt{\rho}\rangle,$ one writes
\begin{equation}
	\langle \sqrt{\rho} | (a \otimes b) |\sqrt{\rho}\rangle
	= \tr(\sqrt{\rho} a \sqrt{\rho} b)
	= \langle \sqrt{\rho} | (a \otimes 1) (\sqrt{\rho} \otimes \sqrt{\rho}^{-1}) (b \otimes 1) |\sqrt{\rho}\rangle,
\end{equation}
then pulls this expression back through the action of the isometry $V$ to obtain
\begin{equation}
	\hat{\omega}(a \otimes b)
		= \langle \sqrt{\rho} | (a \otimes b) |\sqrt{\rho}\rangle
		= \langle a^{\dagger} \omega| \Delta_{\omega}^{1/2} b |\omega\rangle.
\end{equation}
Using the identities of appendix \ref{app:modular}, one can rewrite this in a slightly more symmetric form as
\begin{equation} \label{eq:density-matrix-modular-formula}
	\hat{\omega}(a \otimes b)
		= \langle a^{\dagger} \omega| J_{\omega} | b^{\dagger} \omega\rangle.
\end{equation}

Equation \eqref{eq:density-matrix-modular-formula} tells us how to compute the functional $\hat{\omega}$ using the modular data of the GNS representation $\H_{\omega}$.
In the case of a quantum state described by a density matrix, we had an independent definition of canonical purification, and \eqref{eq:density-matrix-modular-formula} was derived.
In the general, $*$-algebraic setting described further in section \ref{sec:general-algebras}, we can begin with an algebraic state $\omega$ on the $*$-algebra $\A_0,$ and use equation \eqref{eq:density-matrix-modular-formula} as the \textit{definition} of a linear functional $\hat{\omega}$ on the $*$-algebra $\A_0 \otimes \A_0^{\text{op}}.$
For this to make sense, we will have to establish that this actually defines a reflection-positive state in the general $*$-algebraic setting, which we will do in later sections.
First, however, we will expand on \cite[appendix A]{Dutta:canonical} to establish an isomorphism between $\H_{\hat{\omega}}$ and $\H_{\omega}$ in the case where $\A_0$ is a von Neumann algebra and $\omega$ is continuous and faithful; the intuition is that in this setting, one can canonically identify the purifying degrees of freedom in $\H_{\omega}$ with a second copy of the degrees of freedom in $\A_0.$

\subsection{Time-reversal and GNS}
\label{sec:time-reversal-GNS}

If $\A_0$ is a von Neumann algebra and $\omega$ is a continuous (``ultraweakly continuous'') and faithful state, then the representation of $\A_0$ on the GNS space $\H_{\omega}$ is an isomorphism of von Neumann algebras.\footnote{We will return to this fact in section \ref{sec:separating-extension}. For a proof, see for example \cite[proposition 5.18]{Stratila:book}.}
In other words, the minimal von Neumann algebra $\A_{\omega}$ generated by $\A_0$ is identical to $\A_0$ --- no new operators appear by taking limits.

The GNS space $\H_{\omega}$ carries a natural action of $\A_0^{\text{op}}$ via the formula
\begin{equation}
	a^{\text{op}} (b |\omega\rangle) = b a |\omega\rangle.
\end{equation}
In general, however, this operator $a^{\text{op}}$ need not be bounded.
A better behaved action of $\A_0^{\text{op}}$ on $\H_{\omega}$ can be constructed using the modular conjugation $J_{\omega}.$
In the present setting, where $\omega$ is faithful, the vector $|\omega\rangle$ is both cyclic \textit{and separating} for the von Neumann algebra $\A_{\omega}.$
Consequently, the modular conjugation provides an antilinear isomorphism between $\A_{\omega}$ and the commutant algebra $\A_{\omega}'$ via the formula
\begin{equation}
	J_{\omega} \A_{\omega} J_{\omega} = \A_{\omega}'.
\end{equation}
This can be made into a linear isomorphism by introducing a dagger:
\begin{equation}
	a \mapsto J_{\omega} a^{\dagger} J_{\omega}.
\end{equation}
The formula
\begin{equation}
	a^{\text{op}} |\psi\rangle = J_{\omega} a^{\dagger} J_{\omega} |\psi\rangle
\end{equation}
furnishes a bounded action of $\A_{0}^{\text{op}}$ on $\H_{\omega}$, and identifies this algebra with $\A_{\omega}'.$

As indicated in the introduction, we can think of the commutant algebra $\A_{\omega}'$ as encoding the ``purifying degrees of freedom'' introduced by the GNS construction.
If one starts with an algebraic state $\omega$ that is mixed on the von Neumann algebra $\A_0$, then one ends up introducing a nontrivial commutant algebra in the GNS space.\footnote{See appendix \ref{app:purity-theorem} for a discussion of what it means for an abstract algebraic state to be mixed, and for an explanation of the relation to the GNS commutant.}
There is no intrinsic structure to this commutant, but if $\omega$ is faithful, then the modular conjugation $J_{\omega}$ can be used to identify $\A_{\omega}'$ with a second copy of $\A_0$.
To make this identification linear, one takes an adjoint and replaces $\A_0$ with $\A_0^{\text{op}}$.
This induces a linear functional $\hat{\omega}$ on $\A_0 \otimes \A_0^{\text{op}}$ via the formula
\begin{equation}
	\hat{\omega}(a \otimes b)
		\equiv \langle \omega | a J_{\omega} b^{\dagger} J_{\omega} |\omega\rangle
		= \langle a^{\dagger} \omega | J_{\omega} | b^{\dagger} \omega\rangle,
\end{equation}
generalizing the finite-dimensional formula \eqref{eq:density-matrix-modular-formula}.
It is not so hard to show that $\hat{\omega}$ is a positive linear functional, and that the GNS spaces $\H_{\hat{\omega}}$ and $\H_{\omega}$ are related by the unitary map
\begin{equation}
	(a \otimes b) |\hat{\omega}\rangle \mapsto a J_{\omega} b^{\dagger} J_{\omega} |\omega\rangle.
\end{equation}
We will perform this calculation in detail, in a more general setting, in section \ref{sec:non-separating}.

We therefore see that when we start with a continuous, faithful state on a von Neumann algebra, the GNS space possesses ``hidden structure'' that can be expressed in terms of the original degrees of freedom and a time-reversed copy --- see again figure \ref{fig:GNS-crunch}.
When $\omega$ is non-faithful, the identity $J_{\omega} \A_{\omega} J_{\omega} = \A_{\omega}'$ no longer holds --- see appendix \ref{app:modular} --- and the purifying degrees of freedom in the GNS space provide only a partial copy of $\A_0.$
We will discuss this further in section \ref{sec:non-separating}.
For now, we will focus on the subtleties that arise when $\A_0$ is a general $*$-algebra, in which case no continuity properties can be imposed on $\omega$ and the GNS operators are unbounded.

\subsection{The need for generalization}
\label{sec:general-algebras}

Especially in the last few years, a broad community of high-energy theorists has come to appreciate the utility of describing subsystems of a quantum field theory in terms of von Neumann algebras.
However, there is an important and lesser-known setting in which algebraic reasoning is still essential, but where von Neumann algebras are not a part of the fundamental structure.
This is the abstract description of a theory without reference to a particular choice of Hilbert space.
For example, in free field theory in Minkowski spacetime, the Hilbert spaces describing thermal physics at different values of the temperature are inequivalent due to infrared divergences, but should still be thought of as ``sectors'' of a single underlying theory.
The situation is worse in curved spacetime, since in the absence of a time translation symmetry there is no preferred class of thermal Hilbert spaces to discuss.
Instead, as explained for example in \cite{Hollands:axioms}, a general Lorentzian quantum field theory is most naturally formulated in terms of a $*$-algebra, which is an abstract collection of symbols representing the quantum fields, subject to certain algebraic relations and potentially additional structure like an operator product expansion.
One arrives at a particular Hilbert space sector of this theory by choosing an algebraic state --- i.e., a self-consistent set of correlation functions --- and performing the GNS construction.

In a general, interacting quantum field theory, this structure of an abstract $*$-algebra cannot be avoided.
Moreover, it has recently been proposed \cite{Witten:background} that a similar structure is required to describe the experience of a semiclassical observer in closed-universe quantum gravity.
Mathematical subtleties abound due to the fact that a $*$-algebra $\A_0,$ represented on a GNS space $\H_{\omega},$ is generally represented by unbounded operators instead of bounded ones.
To apply the von Neumann algebraic tools that are used to study quantum information theory in the continuum, one must learn how to generate well behaved algebras of bounded operators out of the $*$-algebraic fields.
Procedures of this type have been studied in \cite{Driessler:unbounded-to-vN, Buchholz:unbounded-to-vN}, and our favored approach is explained in appendix \ref{app:star-algebras}.
While intuitive reasoning at the level of von Neumann algebras is sufficient for reaching a conceptual understanding, our point of view is that there are important problems where the language of unbounded operators is both natural and physically transparent, and that developing information-theoretic tools that apply to this general setting is a promising avenue for progress.
The companion paper \cite{Caminiti:paper-2} will study an example of one such problem.

Concretely, a $*$-algebra $\A_0$ is a collection of symbols that can be multiplied and added to one another, that can be multiplied by complex scalars, and that possess an ``adjoint operation'' that implements complex conjugation on multiples of the identity.
The adjoint satisfies the formula
\begin{equation}
	(ab)^* = b^* a^*.
\end{equation}
An algebraic state $\omega$ is a linear functional from $\A_0$ to $\mathbb{C}$ satisfying the ``reflection positivity'' or simply ``positivity'' condition $\omega(a^* a) \geq 0.$
One should think of $\A_0$ as being ``an abstract algebra of fields,'' and one should think of $\omega$ as being ``a set of correlation functions.''
Given a choice of $\omega$, one can obtain a GNS space $\H_{\omega}$ carrying a representation of $\A_0.$
The GNS space possesses a special state $|\omega\rangle$, and a dense set of states obtained by acting on $|\omega\rangle$ with elements of $\A_0.$
This defines, for every $a \in \A_0,$ an operator that knows how to act on every state $b |\omega\rangle$ for $b$ in $\A_0$; but this operator will generally be unbounded, and will not be definable on every state in $\H_{\omega}.$
The overlaps in the GNS space reproduce the correlation functions of the algebraic state via the formula
\begin{equation}
	\langle \omega | a |\omega\rangle = \omega(a).
\end{equation}

In general, the operators coming from $\A_0$ cannot belong to a von Neumann algebra, since a von Neumann algebra is by definition a set of bounded operators.
To proceed, one can define the von Neumann algebra $\A_{\omega}$ to be the smallest von Neumann algebra such that every operator coming from $\A_0$ is ``affiliated.''
For more detail, see appendix \ref{app:star-algebras}.
The chief difficulty arises from the fact that the Hilbert space adjoint $a^{\dagger}$ is generally a distinct operator from $a^*$; this comes from ``domain issues'' intrinsic to unbounded operators, and the difference between $a^{\dagger}$ and $a^*$ can be severe.
Casual readers can largely ignore this subtlety, but for the interested reader we emphasize that due to domain issues, the operator $a + a^*$ is generally not self-adjoint and therefore not diagonalizable.
This causes an obstruction to defining a von Neumann algebra, since one cannot simply define $\A_{\omega}$ to be generated by all bounded truncations of operators like $a + a^*$.
Instead, for each $a \in \A_0$ one examines the polar decomposition $a = V_a |a|,$ and defines $\A_{\omega}$ to be the smallest von Neumann algebra containing all such partial isometries $V_a$ and all bounded functions of the self-adjoint operators $|a|.$

The ``opposite'' $*$-algebra $\A_0^{\text{op}}$ has the same elements as $\A_0$ and a reversed multiplication rule.
The tensor product $\A_0 \otimes \A_0^{\text{op}}$ is the set of all finite sums of multilinear symbols $a \otimes b.$
As a generalization of the finite-dimensional formula \eqref{eq:density-matrix-modular-formula}, we define the linear functional
\begin{equation}
	\hat{\omega} : \A_0 \otimes \A_0^{\text{op}} \to \comps
\end{equation}
via the formula
\begin{equation} \label{eq:defining-equation}
	\hat{\omega}(a \otimes b)
		\equiv \langle a^* \omega | J_{\omega} | b^* \omega\rangle.
\end{equation}
Since $a^*$ and $b^*$ are unbounded operators, one cannot freely rewrite this as
\begin{equation}
	\hat{\omega}(a \otimes b)
	\equiv \langle \omega | a J_{\omega} b^* J_{\omega} \omega\rangle
\end{equation}
without being careful about domain issues.
This means that proofs of the basic properties of $\hat{\omega}$ --- its positivity, and the isomorphism between $\H_{\hat{\omega}}$ and $\H_{\omega}$ in the case that $|\omega\rangle$ is separating for $\A_{\omega}$ --- require some care.
In the next section, we address these issues in the separating case.
We note that time-reversal is necessary for equation \eqref{eq:defining-equation} to define a positive linear functional, since if one interprets equation \eqref{eq:defining-equation} as a functional on $\A_0 \otimes \A_0$, then one can find simple finite-dimensional examples where it is non-positive; for a brief elaboration, see appendix \ref{app:time-reversal}.

We comment, as in the introduction, that the above equations for $\hat{\omega}$ as a state on $\A_0 \otimes \A_0^{\text{op}}$ have appeared previously in the work of Woronowicz \cite{Woronowicz:purification} in the special case that $\A_0$ is a C$^*$ algebra and $\omega$ is a continuous, factorial state.
That paper did not make the connection with the finite-dimensional canonical purification $|\sqrt{\rho}\rangle,$ and the work of \cite{Dutta:canonical} did not make a connection with the general theory of C$^*$ algebras.
The present work unifies these two perspectives on canonical purification and generalizes them both.

\section{Canonical purifications of separating states}
\label{sec:faithful}

In the preceding section, we considered a generic $*$-algebra $\A_0$ with an algebraic state $\omega$, and defined an algebraic state $\hat{\omega}$ on $\A_0 \otimes \A_0^{\text{op}}$ by employing modular quantities from the GNS representation of $\omega$.
In this section, we study the basic structure of $\hat{\omega}$ in the case that $\omega$ is separating.
We show that $\hat{\omega}$ is a positive functional, meaning it has a well defined GNS representation; we demonstrate a unitary isomorphism between the GNS spaces $\H_{\hat{\omega}}$ and $\H_{\omega}$; we prove the modular conjugation formula \eqref{eq:intro-J}; and we demonstrate purity of $\hat{\omega}$ in the case that $\omega$ is a factorial state.

\subsection{Basic algebraic properties}

For every $a \in \A_0,$ one obtains an unbounded operator on the GNS space $\H_{\omega}.$
Details are discussed in appendix \ref{app:star-algebras}.
This operator $a$ has a domain that is not all of $\H_{\omega},$ but that includes all vectors of the form $b|\omega\rangle$ with $b \in \A_0.$
The operator $a$ is ``closable,'' meaning it is well behaved with respect to limits.
By abuse of notation we will use the symbol $a$ for the closure.
For distinct $a, b \in \A_0,$ the domains of these closed operators will be different, but they share the fundamental ``GNS domain''
\begin{equation}
	\DGNS
		= \{c |\omega\rangle \,|\, c \in \A_0\}.
\end{equation}

In appendix \ref{app:star-algebras}, we explain how to construct a von Neumann algebra $\A_{\omega}$ from $\A_0,$ and we prove that $|\omega\rangle$ is always cyclic for this von Neumann algebra.
We say that $\omega$ is a \textit{separating} state if $|\omega\rangle$ is separating for $\A_{\omega}.$
If $\omega$ were a continuous state on a von Neumann algebra, then this would be equivalent to the faithfulness condition that $\omega(a^* a) = 0$ implies $a=0$.
However, in the general, unbounded setting, these are not the same.
The case where $\omega$ is separating is the one where the properties of $\hat{\omega}$ are easiest to study.

In this setting, we expect to have an unbounded generalization of the statement that $J_{\omega}$ maps $\A_{\omega}$ to $\A_{\omega}'.$
We want an equation of the schematic form
\begin{equation} \label{eq:bad-commutator}
	a J_{\omega} b J_{\omega} = J_{\omega} b J_{\omega} a, \qquad a, b \in \A_0.
\end{equation}
This equation cannot literally be true as written due to domain issues coming from the unboundedness of $a$ and $b.$
However, one can show that for any $a, b \in \A_0,$ the vector $J_{\omega} b |\omega\rangle$ is in the domain of $a,$ and one has the vector identity
\begin{equation} \label{eq:domain-J-switching}
	a J_{\omega} b |\omega\rangle
		= J_{\omega} b J_{\omega} a |\omega\rangle.
\end{equation}
This is the unbounded version of equation \eqref{eq:bad-commutator}; for a proof, see appendix \ref{app:commutant-domain}.

\subsection{Positivity}
\label{sec:separating-positivity}

Now we will show that when $\omega$ is separating, the functional $\hat{\omega}$ is positive, i.e., it is actually an algebraic state.
We must demonstrate the inequality
\begin{equation} \label{eq:sec-3-positivity}
	\sum_{j, k} \hat{\omega}((a_j^* \otimes b_j^*) (a_k \otimes b_k)) \geq 0
\end{equation}
for any finite sum.
This is easy to do using equation \eqref{eq:domain-J-switching}.
We start with the calculation
\begin{align}
	\begin{split}
	\sum_{j, k} \hat{\omega}((a_j^* \otimes b_j^*) (a_k \otimes b_k))
		& = \sum_{j, k} \hat{\omega}(a_j^* a_k \otimes b_k b_j^*) \\
		& = \sum_{j, k} \langle a_k^* a_j \omega | J_{\omega} b_j b_k^* \omega\rangle.
	\end{split}
\end{align}
Using equation \eqref{eq:domain-J-switching}, and using the result from appendix \ref{app:GNS} that $a_k$ is equal on its domain to the adjoint of $a_k^*$, we can move $a_k^*$ from the bra to the ket, obtaining
\begin{align}
	\begin{split}
		\sum_{j, k} \hat{\omega}((a_j^* \otimes b_j^*) (a_k \otimes b_k))
		& = \sum_{j, k} \langle a_j \omega | J_{\omega} b_j b_k^* J_{\omega} a_k \omega\rangle.
	\end{split}
\end{align}
We then move $J_{\omega}$ from the ket to the bra using the fact that $J_{\omega}$ is an antilinear operator satisfying $J_{\omega}^{\dagger} = J_{\omega},$ which gives
\begin{align}
	\begin{split}
		\sum_{j, k} \hat{\omega}((a_j^* \otimes b_j^*) (a_k \otimes b_k))
		& = \sum_{j, k} \langle b_j b_k^* J_{\omega} a_k \omega | J_{\omega} a_j \omega\rangle.
	\end{split}
\end{align}
Once more we invoke equation \eqref{eq:domain-J-switching} to move $b_j$ from the bra to the ket, obtaining
\begin{align}
	\begin{split}
		\sum_{j, k} \hat{\omega}((a_j^* \otimes b_j^*) (a_k \otimes b_k))
		& = \sum_{j, k} \langle J_{\omega} a_k J_{\omega} b_k^* \omega | J_{\omega} a_j J_{\omega} b_j^*  \omega\rangle.
	\end{split}
\end{align}
Finally, we use antiunitary of $J_{\omega}$ to simplify this to
\begin{align}
	\begin{split}
		\sum_{j, k} \hat{\omega}((a_j^* \otimes b_j^*) (a_k \otimes b_k))
		& = \sum_{j, k} \langle a_j J_{\omega} b_j^* \omega | a_k J_{\omega} b_k^*  \omega\rangle.
	\end{split}
\end{align}
This is the norm-squared of a vector, so it is nonnegative.

\subsection{GNS isomorphism}
\label{sec:GNS-isomorphism}

Now that we know $\hat{\omega}$ is a positive linear functional in the separating case, we can study its GNS representation $\H_{\hat{\omega}}.$
In particular, we can consider the map from $\H_{\hat{\omega}}$ to $\H_{\omega}$ given by
\begin{equation} \label{eq:separating-unitary}
	(a \otimes b) |\hat{\omega}\rangle
		\mapsto a J_{\omega} b^* |\omega\rangle.
\end{equation}
The calculation of the preceding subsection shows that this is an isometry.
It clearly has dense range, since its range would be dense in $\H_{\omega}$ even if we set $b = 1.$
Equation \eqref{eq:separating-unitary} therefore extends to a unitary isomorphism from $\H_{\hat{\omega}}$ to $\H_{\omega}.$
This extends the discussion of section \ref{sec:time-reversal-GNS} to non-continuous states on general $*$-algebras.

\subsection{Computing the modular conjugation}
\label{sec:modular-properties}

In the preceding subsection, we demonstrated a unitary equivalence between the GNS spaces  $\H_{\hat{\omega}}$ and $\H_{\omega}$ in the case that $|\omega\rangle$ is separating for $\A_{\omega}.$
Under this unitary equivalence, the von Neumann algebra generated by $\A_0 \otimes 1$ is mapped to the von Neumann algebra $\A_{\omega}$, and the state $|\hat{\omega}\rangle$ is mapped to $|\omega\rangle.$
So under the assumptions of the present section, the canonical purification $|\hat{\omega}\rangle$ is cyclic and separating for the von Neumann algebra generated by $(\A_0 \otimes 1).$
What is its modular conjugation, $J_{\hat{\omega}}$?

Under the unitary equivalence from equation \eqref{eq:separating-unitary}, $J_{\hat{\omega}}$ will map to $J_{\omega}.$
So, with ``$\sim$'' denoting unitary equivalence, one has
\begin{equation}
	J_{\hat{\omega}} (a \otimes b) |\hat{\omega}\rangle
	\sim J_{\omega} a J_{\omega} b^* |\omega\rangle.
\end{equation}
Using equation \eqref{eq:domain-J-switching}, one computes
\begin{equation}
	J_{\hat{\omega}} (a \otimes b) |\hat{\omega}\rangle
	\sim b^* J_{\omega} a |\omega\rangle
	\sim (b^* \otimes a^*) |\hat{\omega}\rangle.
\end{equation}
This gives
\begin{equation}
	J_{\hat{\omega}} (a \otimes b) |\hat{\omega}\rangle
	= (b^* \otimes a^*) |\hat{\omega}\rangle,
\end{equation}
as claimed in the introduction.

\subsection{Purity in the factorial case}
\label{sec:separating-purity}

We now wish to understand the conditions under which $\hat{\omega}$ is pure.
As explained in appendix \ref{app:purity-theorem}, in the general unbounded setting there are two senses in which $\hat{\omega}$ may be called pure.
We will study the weaker of these, and say that $\hat{\omega}$ is pure if its GNS purification contains no nontrivial purifying system; i.e., if the only operators on $\H_{\hat{\omega}}$ that commute\footnote{In the parlance of appendix \ref{app:star-algebras}, we mean ``commute strongly.''} with $\A_0 \otimes \A_0^{\text{op}}$ and its adjoints are complex multiples of the identity.
Note that in the bounded case, as explained in appendix \ref{app:purity-theorem}, this is equivalent to the ``strong purity'' statement that $\hat{\omega}$ cannot be written as a nontrivial mixture of other positive functionals.
We will now prove that $\hat{\omega}$ is pure in the weak sense if and only if the state $\omega$ is factorial, meaning that the von Neumann algebra $\A_{\omega}$ has trivial center.

We will establish the relationship between purity and factoriality as a corollary to a slightly stronger, more generally useful statement, broken out as its own lemma below.
For the moment we continue to assume that $\omega$ is separating, though an analogous lemma will be established in the non-separating case in section \ref{sec:separating-purity}.

\begin{lemma}
	On $\H_{\hat{\omega}}$, we will call $(\A_0 \otimes 1)_{\hat{\omega}}, (1 \otimes \A_0^{\text{op}})_{\hat{\omega}},$ and $(\A_0 \otimes \A_0^{\text{op}})_{\hat{\omega}}$ the minimal von Neumann algebras generated by the $*$-algebras in parentheses.
	Under the unitary equivalence of equation \eqref{eq:separating-unitary}, these are mapped respectively to $\A_{\omega}, \A_{\omega}',$ and $\A_{\omega} \vee \A_{\omega}',$ where $\vee$ is the algebraic union.
\end{lemma}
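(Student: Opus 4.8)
The plan is to transport the three $*$-algebras through the unitary isomorphism $U : \H_{\hat{\omega}} \to \H_{\omega}$ of equation \eqref{eq:separating-unitary}, under which $(a \otimes b)|\hat{\omega}\rangle \mapsto a J_{\omega} b^* |\omega\rangle$, and to identify each transported generating operator with an operator on $\H_{\omega}$ that is manifestly affiliated to $\A_{\omega}$ or to $\A_{\omega}'$. Since the minimal von Neumann algebra generated by an unbounded $*$-algebra is built, as in appendix \ref{app:star-algebras}, from the polar decompositions of its elements, and since polar decompositions are carried along by any unitary --- and, antilinearly, by $J_{\omega}$ --- it is enough to see how $U$ intertwines each generator with an operator on $\H_{\omega}$.

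For $(\A_0 \otimes 1)_{\hat{\omega}}$: applying $c \otimes 1$ to $(a \otimes b)|\hat{\omega}\rangle$ gives $(ca \otimes b)|\hat{\omega}\rangle$, which $U$ sends to $c a J_{\omega} b^* |\omega\rangle = c\bigl(a J_{\omega} b^* |\omega\rangle\bigr)$; thus on the dense domain spanned by such vectors, $c \otimes 1$ is intertwined with left multiplication by $c$, i.e.\ with the GNS operator $c$ on $\H_{\omega}$. Hence $U$ carries $(\A_0 \otimes 1)_{\hat{\omega}}$, the smallest von Neumann algebra to which every $c \otimes 1$ is affiliated, onto the smallest von Neumann algebra to which every $c \in \A_0$ is affiliated, which is $\A_{\omega}$ by definition.

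For $(1 \otimes \A_0^{\text{op}})_{\hat{\omega}}$: here equation \eqref{eq:domain-J-switching} does the work. Applying $1 \otimes b$ to $(a \otimes c)|\hat{\omega}\rangle$ gives $(a \otimes cb)|\hat{\omega}\rangle \mapsto a J_{\omega} b^* c^* |\omega\rangle$. Applying \eqref{eq:domain-J-switching} with the algebra element $b^* c^*$ gives $a J_{\omega} b^* c^* |\omega\rangle = J_{\omega} b^* c^* J_{\omega} a |\omega\rangle = (J_{\omega} b^* J_{\omega})\bigl(J_{\omega} c^* J_{\omega} a |\omega\rangle\bigr)$, and applying \eqref{eq:domain-J-switching} once more gives $J_{\omega} c^* J_{\omega} a |\omega\rangle = a J_{\omega} c^* |\omega\rangle$; so $1 \otimes b$ is intertwined with (the closure of) $J_{\omega} b^* J_{\omega}$ on $\H_{\omega}$. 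Since $\omega$ is separating, Tomita--Takesaki gives $J_{\omega} \A_{\omega} J_{\omega} = \A_{\omega}'$, and conjugation by the antiunitary $J_{\omega}$ carries the polar-decomposition generators of $\A_{\omega}$ to those of the family $\{J_{\omega} b^* J_{\omega} : b \in \A_0\}$ (using $\{b^* : b \in \A_0\} = \A_0$); hence $U$ carries $(1 \otimes \A_0^{\text{op}})_{\hat{\omega}}$ onto $\A_{\omega}'$.

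Finally, $(\A_0 \otimes \A_0^{\text{op}})_{\hat{\omega}}$ contains the previous two algebras, so it maps onto a von Neumann algebra containing $\A_{\omega} \vee \A_{\omega}'$; for the reverse inclusion, write $a \otimes b = (a \otimes 1)(1 \otimes b)$, intertwined with the closure of $a\, (J_{\omega} b^* J_{\omega})$, and note that any unitary $w$ in the center $\A_{\omega} \cap \A_{\omega}' = (\A_{\omega} \vee \A_{\omega}')'$ commutes with $a$ (since $w \in \A_{\omega}'$) and with $J_{\omega} b^* J_{\omega}$ (since $w \in \A_{\omega}$), hence with the product and its closure, so $a \otimes b$ is affiliated to $\A_{\omega} \vee \A_{\omega}'$. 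The conceptual content is light; the real work, and the only genuine obstacle, is making ``the generated von Neumann algebra is carried to the generated von Neumann algebra'' precise in the unbounded setting: one must check that $1 \otimes b$, whose closure the appendix builds from its polar decomposition, is intertwined by $U$ with the \emph{full} closure of $J_{\omega} b^* J_{\omega}$ rather than a proper restriction of it (i.e.\ that the relevant dense domain is a core), that conjugation by $J_{\omega}$ preserves polar decompositions antilinearly, and that the product of an operator affiliated to $\A_{\omega}$ with one affiliated to $\A_{\omega}'$ is affiliated to the join. Each of these follows from the machinery of appendix \ref{app:star-algebras}.
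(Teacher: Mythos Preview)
Your approach is correct and follows the same overall strategy as the paper --- transport generators through $U$ and identify the resulting unbounded operators on $\H_{\omega}$ --- but it leans on the polar-decomposition description of the generated von Neumann algebra, whereas the paper works with the commutant description $\A_{\omega}' = \A_0' \cap (\A_0')^{\dagger}$ throughout. For the second claim, rather than arguing that $J_{\omega}$-conjugation carries polar decompositions to polar decompositions, the paper observes directly that $L$ commutes strongly with $J_{\omega} a J_{\omega}$ if and only if $J_{\omega} L J_{\omega}$ commutes strongly with $a$, which immediately gives $(J_{\omega} \A_0 J_{\omega})' \cap ((J_{\omega} \A_0 J_{\omega})')^{\dagger} = J_{\omega} \A_{\omega}' J_{\omega} = \A_{\omega}$. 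For the third claim, the paper bypasses your affiliation-of-products argument entirely by proving the commutant identity $(\A_0 \otimes \A_0^{\text{op}})' = (\A_0 \otimes 1)' \cap (1 \otimes \A_0^{\text{op}})'$ (together with its daggered version), which holds simply because strong commutation is a linear condition. The commutant route sidesteps exactly the domain-and-core issues you flag in your final paragraph --- no need to check that $U$ intertwines \emph{full} closures rather than restrictions, nor that products of affiliated operators are affiliated to the join --- so it is the cleaner execution of the same plan.
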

\begin{proof}
	Let $U$ denote the unitary from equation \eqref{eq:separating-unitary}.
	Direct computation yields
	\begin{equation}
		U (a \otimes 1) U^{\dagger}
			= a
	\end{equation}
	and
	\begin{equation}
		U (1 \otimes a) U^{\dagger}
			= J_{\omega} a^* J_{\omega}.
	\end{equation}
	The first of these clearly establishes the equality
	\begin{equation}
		U (\A_0 \otimes 1)_{\hat{\omega}} U^{\dagger}
			= \A_{\omega}.
	\end{equation}
	The second is a little more subtle to treat in the unbounded case due to the technical nature of the definitions.
	In appendix \ref{app:vN}, we define $\A_0'$	to be the set of bounded operators on $\H_{\omega}$ that commute ``strongly'' with $\A_0,$ i.e., the set of all operators $L'$ that preserve the domain of each $a \in \A_0$ and commute with $a$ on that domain.
	In that same appendix, we established the formula
	\begin{equation}
		\A_{\omega}'
			= \A_0' \cap (\A_0')^{\dagger}.
	\end{equation}
	But how do we see that this is the same as the von Neumann algebra generated by the unbounded operators $J_{\omega} a J_{\omega}$ with $a \in \A_0$?
	
	The key observation, which is very easy to check, is that $L$ commutes strongly with $J_{\omega} a J_{\omega}$ if and only if $J_{\omega} L J_{\omega}$ commutes strongly with $a.$
	This establishes the equality of sets
	\begin{equation}
		(J_{\omega} \A_0 J_{\omega})' \cap ((J_{\omega} \A_0 J_{\omega})')^{\dagger}
			= J_{\omega} \A_0' J_{\omega} \cap (J_{\omega} \A_0' J_{\omega})^{\dagger} 
			= J_{\omega} \A_{\omega}' J_{\omega}
			= \A_{\omega}.
	\end{equation}
	Taking an additional commutant gives equality between $\A_{\omega}'$ and the von Neumann algebra generated by the unbounded operators $J_{\omega} \A_0 J_{\omega}.$
	This establishes the claim that $U$ takes $(1 \otimes \A_0^{\text{op}})_{\hat{\omega}}$ to $\A_{\omega}'.$
	
	To finish the proof, it remains to establish the identity
	\begin{equation}
		U (\A_0 \otimes \A_0^{\text{op}})_{\hat{\omega}} U^{\dagger}
			= \A_{\omega} \vee \A_{\omega}'.
	\end{equation}
	Given what we have already shown, we may equivalently show
	\begin{equation}
		(\A_0 \otimes \A_0^{\text{op}})_{\hat{\omega}}
		= (\A_0 \otimes 1)_{\hat{\omega}} \vee (1 \otimes \A_0^{\text{op}})_{\hat{\omega}}.
	\end{equation}
	It will be easier to do this in the equivalent commutant form,
	\begin{equation}
		(\A_0 \otimes \A_0^{\text{op}})_{\hat{\omega}}'
		= (\A_0 \otimes 1)_{\hat{\omega}}' \cap (1 \otimes \A_0^{\text{op}})_{\hat{\omega}}'.
	\end{equation}
	
	In the notation we used above for the commutants of sets of unbounded operators, we may rewrite our desired equality as
	\begin{equation} \label{eq:cap-lemma}
		(\A_0 \otimes \A_0^{\text{op}})' \cap [(\A_0 \otimes \A_0^{\text{op}})']^{\dagger}
		= (\A_0 \otimes 1)' \cap [(\A_0 \otimes 1)']^{\dagger} \cap (1 \otimes \A_0^{\text{op}})' \cap [(1 \otimes \A_0^{\text{op}})']^{\dagger}.
	\end{equation}
	Since commutation is a linear condition, it is easy to see that an operator commutes with $\A_0 \otimes 1$ and $1 \otimes \A_0^{\text{op}}$ if and only if it commutes with $\A_0 \otimes \A_0^{\text{op}}$; this establishes equality in equation \eqref{eq:cap-lemma}, which completes the proof.
\end{proof}

\begin{corollary}
	In the factorial case, we have
	\begin{equation}
		(\A_0 \otimes \A_0^{\text{op}})_{\hat{\omega}}'
		= \comps.
	\end{equation}
\end{corollary}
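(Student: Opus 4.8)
The plan is to deduce the corollary directly from the Lemma together with the double commutant theorem. By the Lemma, the unitary $U$ of equation \eqref{eq:separating-unitary} implements
\begin{equation}
	U\,(\A_0 \otimes \A_0^{\text{op}})_{\hat{\omega}}\,U^{\dagger}
		= \A_{\omega} \vee \A_{\omega}'.
\end{equation}
Since conjugation by a unitary is an isomorphism that commutes with taking commutants, it suffices to compute $(\A_{\omega} \vee \A_{\omega}')'$ and show it is trivial in the factorial case.

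The key computation is the standard identity for the commutant of a join of von Neumann algebras. First I would note that $(\A_0 \otimes \A_0^{\text{op}})_{\hat{\omega}}$ is by construction a genuine von Neumann algebra, so the double commutant theorem applies to it; the same is true of $\A_{\omega}$ and $\A_{\omega}'$. Then, using $(\M \vee \N)' = \M' \cap \N'$ for von Neumann algebras, I would write
\begin{equation}
	(\A_{\omega} \vee \A_{\omega}')'
		= \A_{\omega}' \cap (\A_{\omega}')'
		= \A_{\omega}' \cap \A_{\omega},
\end{equation}
where the last equality uses $(\A_{\omega}')' = \A_{\omega}$. The right-hand side is precisely the center $\Z(\A_{\omega}) = \A_{\omega} \cap \A_{\omega}'$.

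Finally, invoking the definition of factoriality --- that $\A_{\omega}$ has trivial center, $\Z(\A_{\omega}) = \comps$ --- gives $(\A_{\omega} \vee \A_{\omega}')' = \comps$, and transporting back through $U$ yields $(\A_0 \otimes \A_0^{\text{op}})_{\hat{\omega}}' = \comps$, as claimed.

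\textbf{Main obstacle.} There is essentially no obstacle here: the statement is a one-line corollary once the Lemma is in hand, and the only thing to be careful about is that all the objects involved (the minimal von Neumann algebra generated by $\A_0 \otimes \A_0^{\text{op}}$, and $\A_{\omega}$, $\A_{\omega}'$) are honest von Neumann algebras, so that the double commutant theorem and the join--commutant identity apply without modification. The unbounded subtleties were already dispatched inside the proof of the Lemma, so the corollary itself is purely a statement about bounded operators.
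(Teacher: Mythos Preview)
Your proposal is correct and follows essentially the same argument as the paper: use the Lemma to transport the commutant computation to $\H_{\omega}$, identify $(\A_{\omega} \vee \A_{\omega}')'$ with the center $\A_{\omega} \cap \A_{\omega}'$, and invoke factoriality to conclude triviality. The paper's proof is simply a terser version of what you wrote.
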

\begin{proof}
	By the lemma, it suffices to show that the commutant of $\A_{\omega} \vee \A_{\omega}'$ on $\H_{\omega}$ is trivial.
	But the commutant of $\A_{\omega} \vee \A_{\omega}'$ is the center $\A_{\omega} \cap \A_{\omega}'.$
	When $\omega$ is factorial, this center is trivial.
\end{proof}

\section{Canonical purifications of non-separating states}
\label{sec:non-separating}

\textit{Note:} The unbounded versions of the arguments in this section are significantly more complicated than the bounded ones, but the manipulations of non-separating states for bounded settings are interesting in their own right.
To accommodate readers who may want to understand these points without delving into unbounded operators, each subsection below has been structured so that bounded considerations are frontloaded; casual readers can skip to the next subsection as soon as unbounded issues are mentioned.

\subsection{The general strategy}
\label{sec:separating-extension}

A strategy we will employ multiple times in what follows involves extending a non-separating state to a separating one.
This trick is widely used in the literature on operator algebras, and is closely related to the theory of ``standard'' von Neumann algebras \cite{Araki:natural-1, Connes:standard, Haagerup:standard}.

Suppose that $|\omega\rangle$ is not separating for the von Neumann algebra $\A_{\omega}.$
This means that the subspace $\A'_{\omega}|\omega\rangle$ is not dense in the GNS space $\H_{\omega}.$
We denote the projector onto the closure of this subspace by $P_{\omega}.$
Let $|j\rangle$ be an orthonormal basis for the support of $(1-P_{\omega}),$ and define a new functional $\phi$ on $\A_\omega$ by\footnote{This is the appropriate formula for when there are an infinite number of basis elements $|j\rangle$; if there are only finitely many, then the sequence $1/2^j$ should be replaced by any finite, positive sequence that sums to one.} 
\begin{equation}
	\phi(L)
		= \frac{1}{2} \left[ \langle \omega | L |\omega\rangle + \sum_{j=1}^{\infty} \frac{1}{2^j} \langle j | L |j\rangle \right].
\end{equation}
This is easily seen to be properly normalized, and it satisfies the positivity condition
\begin{equation}
	\phi(L^{\dagger} L)
	\equiv \frac{1}{2} \left[ \langle L \omega | L \omega\rangle + \sum_{j=1}^{\infty} \frac{1}{2^j} \langle L j | L j \rangle \right]
	\geq 0.
\end{equation}
In fact, for the inequality to be saturated, one must have $L |\omega\rangle = 0$ and $L (1 - P_{\omega}) = 0.$
But $L|\omega\rangle = 0$ implies $L \A_{\omega}' |\omega\rangle = 0,$ hence $L P_{\omega} = 0,$ hence $L = 0.$
Combining our observations so far, we learn that $\phi(L^{\dagger} L)$ vanishes if and only if we have $L=0$, so $\phi$ is a faithful state on the von Neumann algebra $\A_{\omega}.$

Now let us consider the GNS space $\H_{\phi}.$
This is a space built from the algebra $\A_{\omega},$ so $\A_{\omega}$ has a natural action on $\H_{\phi}.$
The operator topologies on $\H_{\phi}$ are distinct from those on $\H_{\omega}$, and naively one might expect that $\A_{\omega}$ is not a von Neumann algebra on $\H_{\phi}$; that instead it must be ``completed'' into a larger algebra $\A_{\phi}.$
However, standard tricks in the theory of von Neumann algebras --- see \cite[proposition 5.18]{Stratila:book} --- tell us that the version of $\A_{\omega}$ that acts on $\H_{\phi}$ is already a von Neumann algebra.
To help us keep track of which space we are working in, we will call this algebra $\A_{\phi},$ even though its elements are in one-to-one correspondence with those of $\A_{\omega}.$
The utility of introducing $\phi$ is that the vector $|\phi\rangle$ is cyclic and separating for $\A_{\phi}.$

As explained in appendix \ref{app:cyclicity}, the state $|\omega\rangle$ is cyclic for $\A_{\omega}.$
Consequently, one can define a map from $\H_{\omega}$ to $\H_{\phi}$ via the formula\footnote{It makes sense to act with $P_{\omega}$ in the GNS space $\H_{\phi}$ because $P_{\omega}$ is an element of $\A_{\omega}$; see appendix \ref{app:modular}.}
\begin{equation} \label{eq:sec-3-V}
	V : L |\omega\rangle \mapsto \sqrt{2} L P_{\omega} |\phi\rangle,
\end{equation}
and this is easily checked to be an isometry.
However, $V$ is generally not unitary, since its image will not be dense in $\H_{\phi}.$
We can identify its image via an application of the centralizer theorem, which is a classic theorem in Tomita-Takesaki theory that we review in appendix \ref{app:centralizer}.
For any $L \in \A_{\phi}$, one has
\begin{equation}
	\langle \phi | L P_{\omega} |\phi\rangle = \frac{1}{2} \langle \omega | L | \omega \rangle = \langle \phi | P_{\omega} L |\phi\rangle.
\end{equation}
So using the results of appendix \ref{app:centralizer}, one has
\begin{equation}
	J_{\phi} P_{\omega} |\phi\rangle = \Delta_{\phi}^{1/2} P_{\omega} |\phi\rangle = P_{\omega} |\phi\rangle.
\end{equation}
Consequently, if we define
\begin{equation}
	P_{\phi}' \equiv J_{\phi} P_{\omega} J_{\phi},
\end{equation}
then we have
\begin{equation}
	\sqrt{2} L P_{\omega}|\phi\rangle = \sqrt{2} P_{\phi}' L |\phi\rangle.
\end{equation}
States of the form $L|\phi\rangle$ are dense in $\H_{\phi},$ so we see that the image of $V$ is the support of the projector $P_{\phi}'.$

Just as in the finite-dimensional case from section \ref{sec:finite-dimension}, the modular operator $\Delta_{\omega}$ on $\H_{\omega}$ is closely related to the modular operator $\Delta_{\phi}$ on $\H_{\phi}.$
To see this, we use the defining formula for the Tomita operator $S_{\omega}$ (see appendix \ref{app:modular}), which is
\begin{equation}
	S_{\omega} L |\omega\rangle
		= P_{\omega} L^{\dagger} |\omega\rangle
\end{equation}
If we act on the left with $V$, we obtain
\begin{equation}
	V S_{\omega} L |\omega\rangle
	= \sqrt{2} P_{\omega} L^{\dagger} P_{\omega} |\phi\rangle.
\end{equation}
We can then rewrite this as 
\begin{equation}
	V S_{\omega} L |\omega\rangle
	= \sqrt{2} S_{\phi} P_{\omega} L P_{\omega} |\phi\rangle
	= S_{\phi} P_{\omega} V L |\omega\rangle.
\end{equation}
So $V S_{\omega}$ is equal on its domain to $S_{\phi} P_{\omega} V$, and multiplying by adjoints gives the useful formula
\begin{equation} \label{eq:separating-mod-op}
	\Delta_{\omega}
		= V^{\dagger} P_{\omega} \Delta_{\phi} P_{\omega} V.
\end{equation}
From this we can also derive a formula for $J_{\omega}$.
Given $L_1, L_2 \in \A_{\omega},$ we have
\begin{align} 
	\begin{split}
	\langle L_1 \omega | J_{\omega} |L_2 \omega\rangle
		& = \langle L_1 \omega | \Delta_{\omega}^{1/2} |L_2^{\dagger} \omega \rangle \\
		& = \langle L_1 \omega | V^{\dagger} P_{\omega} \Delta_{\phi}^{1/2} P_{\omega} V |L_2^{\dagger} \omega \rangle \\
		& = 2 \langle P_{\omega} L_1 P_{\omega} \phi| \Delta_{\phi}^{1/2} |P_{\omega} L_2^{\dagger} P_{\omega} \phi\rangle \\
		& = 2 \langle P_{\omega} L_1 P_{\omega} \phi| J_{\phi} |P_{\omega} L_2 P_{\omega} \phi\rangle \\
		& = \langle L_1 \omega | V^{\dagger} P_{\omega} J_{\phi} P_{\omega} V |L_2 \omega\rangle,
	\end{split}
\end{align}
from which we can deduce
\begin{equation} \label{eq:separating-mod-conj}
	J_{\omega} = V^{\dagger} P_{\omega} J_{\phi} P_{\omega} V.
\end{equation}

The utility of equation \eqref{eq:separating-mod-conj} is that it gives us a simple formula relating the canonical purification $\hat{\omega}$ to certain expectation values of $|\phi\rangle.$
In the bounded case, where elements of $\A_0$ can be interpreted as elements of $\A_{\omega},$ one has
\begin{align} \label{eq:separating-ext-pen}
	\begin{split}
	\hat{\omega}(a \otimes b)
		& = \langle a^* \omega | J_{\omega} | b^*\omega\rangle \\
		& = \langle a^* \omega | V^{\dagger} P_{\omega} J_{\phi} P_{\omega} V | b^*\omega\rangle \\
		& = 2 \langle P_{\omega} a^* P_{\omega} \phi | J_{\phi} | P_{\omega} b^* P_{\omega} \phi\rangle.
	\end{split}
\end{align}
A slight simplification is afforded by using our earlier observation that $J_{\phi} P_{\omega} J_{\phi} = P_{\phi}'$ is an operator in $\A_{\phi}',$ and that it satisfies $P_{\phi}'|\phi\rangle = P_{\omega} |\phi\rangle.$
Using these identities, together with $J_{\phi}^2 = 1,$ one can rewrite the above as
\begin{align} \label{eq:separating-ext-eq}
	\begin{split}
		\hat{\omega}(a \otimes b)
		& = 2 \langle a^* P_{\omega} \phi | J_{\phi} | b^* P_{\omega} \phi\rangle \\
		& = 2\,\hat{\phi}(P_{\omega} a \otimes P_{\omega} b).
	\end{split}
\end{align}

In the unbounded case, one cannot perform all of the manipulations of the preceding paragraph.
The operator $a \in \A_0$ is not an element of $\A_{\omega},$ so it does not have an intrinsic action on $\H_{\phi},$ and we may not pass directly from line 2 to 3 of equation \eqref{eq:separating-ext-pen}.
We must first define an action of $\A_0$ on $\H_{\phi}$.
This is not so hard to do --- as explained in appendix \ref{app:star-algebras}, every element $a \in \A_0$ can be approximated on its domain by a sequence $a_n$ of bounded operators in $\A_{\omega}$ obtained via the polar decomposition.
Each of these operators $a_n$ acts on $\H_{\phi},$ so we simply define $a$ on $\H_{\phi}$ as their limit: we say $|\psi\rangle$ in $\H_{\phi}$ is in the domain of $a$ if the sequence $a_n |\psi\rangle$ converges.

Clearly $a$ has in its domain the vector $P_{\omega} |\phi\rangle,$ since we have
\begin{equation}
	\lim_n a_n P_{\omega}|\phi\rangle
		= \frac{1}{\sqrt{2}} \lim_n V a_n |\omega\rangle
		= \frac{1}{\sqrt{2}} V a |\omega\rangle.
\end{equation}
In fact, for any $b \in \A_0,$ the vector $b P_{\omega}|\phi\rangle$ is in the domain of $a$; for we have
\begin{equation}
	\lim_n a_n b P_{\omega} |\phi\rangle
		= \lim_n \lim_m a_n b_m P_{\omega} |\phi\rangle
		= \frac{1}{\sqrt{2}} \lim_n \lim_m V a_n b_m |\omega\rangle = \frac{1}{\sqrt{2}} V a b |\omega\rangle.
\end{equation}
From these limits we may deduce the formula
\begin{equation}
	a b P_{\omega} |\phi\rangle = \frac{1}{\sqrt{2}} V a b |\omega\rangle.
\end{equation}
Similar logic tells us that for any bounded $L' \in \A_{\phi}',$ the vectors $L' b P_{\omega}|\phi\rangle$ are also in the domain of $a,$ with action
\begin{equation}
	a L' b P_{\omega}|\phi\rangle
		= \frac{1}{\sqrt{2}} L' V a b |\omega\rangle.
\end{equation}
This allows every $a \in \A_0$ to be defined on the set $\A_{\phi}' V \A_0 |\omega\rangle$, which is a dense subspace of $\H_{\phi}$ (see footnote for proof),\footnote{\label{foot:density}What we want to show is that $V \A_0 |\omega\rangle$ is a cyclic subspace for $\A_{\phi}'$. This is the same as saying that it is a separating subspace for $\A_{\phi}.$ For $L \in \A_{\phi},$ if we have $L V a |\omega\rangle = 0$ for every $a \in \A_0,$  then this is the same as saying that $L$ vanishes on the image of $V$. Equivalently, because $\A_{\omega} |\omega\rangle$ is dense in $\H_{\omega},$ we have $L V T |\omega\rangle = 0$ for every $T \in \A_{\omega}.$ But this gives
\begin{equation}
	0 = L V T|\omega\rangle = L T P_{\omega} |\phi\rangle = V L T |\omega\rangle,
\end{equation}
and because $V$ is an isometry this implies $L T |\omega\rangle = 0$ for every $T \in \A_{\omega},$ hence $L = 0.$}
so every $a \in \A_0$ is densely defined on $\H_{\phi}.$
Moreover, it is straightforward to show that on this common domain, one has $a^{\dagger} = a^*$ --- this follows from the manipulation
\begin{align}
	\begin{split}
	\langle L_1' V b_1 \omega | a | L_2' V b_2\omega\rangle
		& = \lim_n \langle L_1' V b_1 \omega | a_n | L_2' V b_2\omega\rangle \\
		& = \lim_n \langle L_1' a_n^{\dagger} V b_1 \omega | L_2' V b_2 \omega \rangle \\
		& = \lim_n \lim_m \langle L_1' a_n^{\dagger} V (b_1)_m \omega | L_2' V b_2 \omega \rangle \\
		& = \lim_n \lim_m \langle L_1' V a_n^{\dagger} (b_1)_m \omega | L_2' V b_2 \omega \rangle \\
		& = \lim_n \langle L_1' V a_n^{\dagger} b_1 \omega | L_2' V b_2 \omega \rangle \\
		& = \langle L_1' V a^{\dagger} b_1 \omega | L_2' V b_2 \omega \rangle \\
		& = \langle L_1' V a^{*} b_1 \omega | L_2' V b_2 \omega \rangle \\
		& = \langle a^{*} L_1' V b_1 \omega | L_2' V b_2 \omega \rangle.
	\end{split}
\end{align}
So we have constructed a $*$-representation of $\A_0$ on $\H_{\phi}$ in terms of a collection of operators that act on a common domain $\A_{\phi}' V \A_0|\omega\rangle,$ and that satisfy (in the language of appendix \ref{app:unbounded-operators}) the extension formula $a^* \subseteq a^{\dagger}.$
Consequently, any element $a \in \A_0$ has a natural closed extension acting on a subspace of $\H_{\phi}$ including $\A_{\phi}' V \A_0|\omega\rangle.$
By abuse of notation, we will call this closed operator $a$ as well.

In terms of these closed operators, we have
\begin{align}
	\begin{split}
		\hat{\omega}(a \otimes b)
		& = \langle a^* \omega | V^{\dagger} P_{\omega} J_{\phi} P_{\omega} V | b^*\omega\rangle \\
		& = 2 \langle P_{\omega} a^* P_{\omega} \phi | J_{\phi} | P_{\omega} b^* P_{\omega} \phi \rangle.
	\end{split}
\end{align}
Since the operators $a^*$ and $b^*$ are affiliated with $\A_{\phi}$ by construction, similar manipulations to those in the bounded case give the simplified expression
\begin{align} \label{eq:unbounded-omegahat}
	\begin{split}
		\hat{\omega}(a \otimes b)
		& = 2 \langle a^* P_{\omega} \phi | J_{\phi} | b^* P_{\omega} \phi \rangle.
	\end{split}
\end{align}
\subsection{Positivity}
\label{sec:positivity}

Now we have the tools we need to prove positivity of $\hat{\omega}$ in the case that $\omega$ is not separating.
As in section \ref{sec:separating-positivity}, the inequality we wish to establish is
\begin{align}
	\begin{split}
		\sum_{j, k} \hat{\omega}((a_j^* \otimes b_j^*) (a_k \otimes b_k))
			\geq 0
	\end{split}
\end{align}
for any finite sum.

Suppose first that the representation of $\A_0$ on $\H_{\omega}$ is bounded.
Using the techniques of section \ref{sec:separating-extension}, we can embed $\H_{\omega}$ into a GNS space $\H_{\phi}$ with a cyclic and separating vector $|\phi\rangle.$
Using equation \eqref{eq:separating-ext-eq}, we write
\begin{align}
	\begin{split}
		\sum_{j, k} \hat{\omega}((a_j^* \otimes b_j^*) (a_k \otimes b_k))
		& = 2 \sum_{j, k}\hat{\phi}(P_{\omega} a_j^* a_k \otimes P_{\omega} b_k b_j^*) \\
		& = 2 \sum_{j, k} \langle a_k^* a_j P_{\omega} \phi|J_{\phi} | b_j b_k^* P_{\omega} \phi\rangle.
	\end{split}
\end{align}
Since $|\phi\rangle$ is separating, $J_{\phi}$ maps $\A_{\phi}$ to $\A_{\phi}'$ by conjugation, and it is easy to manipulate this expression to obtain
\begin{align}
	\begin{split}
		\sum_{j, k} \hat{\omega}((a_j^* \otimes b_j^*) (a_k \otimes b_k))
		& = 2 \sum_{j, k} \langle a_j J_{\phi} b_j^* P_{\omega} \phi| a_k J_{\phi} b_k^* P_{\omega} \phi\rangle,
	\end{split}
\end{align}
where we have also used the general separating identity $J_{\phi} = J_{\phi}^{\dagger}$, together with the identity $J_{\phi} P_{\omega} |\phi\rangle = P_{\omega}|\phi\rangle$ from section \ref{sec:separating-extension}.
Since this is the norm-squared of a state, it is nonnegative.

In the case where the representation of $\A_0$ is unbounded, we use equation \eqref{eq:unbounded-omegahat} to write
\begin{align}
\begin{split}
	\sum_{j, k} \hat{\omega}((a_j^* \otimes b_j^*) (a_k \otimes b_k))
	& = 2 \sum_{j, k} \langle a_k^* a_j P_{\omega} \phi | J_{\phi} | b_j b_k^* P_{\omega} \phi \rangle.
\end{split}
\end{align}
As explained in section \ref{sec:separating-extension}, the operators $a_j$ and $b_j$ live in a $*$-representation of $\A_0$ on $\H_{\phi},$ and are affiliated to $\A_{\phi}.$
An identical argument to that given in appendix \ref{app:commutant-domain} then tells us that for any $a, b \in \A_0,$ the vector $J_{\phi} b P_{\omega}|\phi\rangle$ is in the domain of $a$ and satisfies the equation
\begin{equation}
	a J_{\phi} b P_{\omega} |\phi\rangle
		= J_{\phi} b J_{\phi} a P_{\omega} |\phi\rangle.
\end{equation}
Applying this identity twice, we get
\begin{align} \label{eq:omegahat-unbounded-isometry}
	\begin{split}
		\sum_{j, k} \hat{\omega}((a_j^* \otimes b_j^*) (a_k \otimes b_k))
		& = 2 \sum_{j, k} \langle a_j P_{\omega} \phi | J_{\phi} b_j b_k^* J_{\phi} | a_k P_{\omega} \phi \rangle \\
		& = 2 \sum_{j, k} \langle J_{\phi} b_j^* J_{\phi} a_j P_{\omega} \phi | J_{\phi} b_k^* J_{\phi} a_k P_{\omega} \phi \rangle \\
		& = 2 \sum_{j, k} \langle a_j J_{\phi} b_j^* P_{\omega} \phi | a_k J_{\phi} b_k^* P_{\omega} \phi \rangle,
	\end{split}
\end{align}
which is nonnegative.

\subsection{Structure of the GNS space}
\label{sec:GNS-isometry}

In the preceding subsection, we showed that the functional $\hat{\omega}$ defined in equation \eqref{eq:intro-main-formula} is positive on $\A_0 \otimes \A_0^{\text{op}}.$
This means that it can be used to produce a GNS space $\H_{\hat{\omega}}$ carrying a representation of $\A_0 \otimes \A_0^{\text{op}}.$
In this section we study the structure of this representation.
In particular, we consider the following maps between GNS spaces associated with $\omega, \hat{\omega}, \phi,$ and $\hat{\phi},$ where $\phi$ was defined in section \ref{sec:separating-extension}.
Note that $L$ is an element of $\A_{\omega}$ and $a$ is an element of $\A_0$.
\begin{align}
	\H_{\omega} \hookrightarrow  \H_{\phi} && L |\omega\rangle \mapsto \sqrt{2} L P_{\omega} |\phi\rangle, \label{eq:separating-inclusion} \\
	\H_{\omega} \hookrightarrow \H_{\hat{\omega}} && a|\omega\rangle \mapsto (a \otimes 1) |\hat{\omega}\rangle, \label{eq:omega-to-omegahat}\\
	\H_{\hat{\phi}} \cong \H_{\phi} && (L_1 \otimes L_2) |\hat{\phi}\rangle \mapsto L_1 J_{\phi} L_2^{\dagger} |\phi\rangle, \label{eq:phihat-to-phi} \\
	\H_{\hat{\omega}} \cong \H_{\phi} && (a \otimes b)|\hat{\omega}\rangle \mapsto \sqrt{2} a J_{\phi} b^* P_{\omega} |\phi\rangle.\label{eq:two-hats}
\end{align}

The first map was shown to be an isometry in section \ref{sec:separating-extension}.
In the case that $\omega$ is not separating, it is not unitary, because it maps --- thanks to the arguments given in section \ref{sec:separating-extension} --- onto the support of the projector $P_{\phi}' = J_{\phi} P_{\omega} J_{\phi}.$

The second map is clearly isometric from the definition of $\hat{\omega}.$
Its non-unitarity will follow from non-unitarity of the first map, once we establish unitarity of the fourth map.

Unitarity of the third map holds automatically because $\phi$ was separating, thanks to the arguments of section \ref{sec:GNS-isomorphism}.

The fourth map is an isometry thanks to equation \eqref{eq:omegahat-unbounded-isometry}, but we will have to do a little extra work to show that it is unitary.
To do this, it suffices to show that it has a dense image.
The density of the right-hand side of the fourth map in $\H_{\phi}$ is equivalent to density of the set $\A_0 J_{\phi} V \A_0 |\omega\rangle,$ where $V$ is the map in equation \eqref{eq:separating-inclusion}.
Density of $\A_0 J_{\phi} V \A_{\omega} |\omega\rangle$ follows from footnote \ref{foot:density}, and the general conclusion follows because elements of $\A_0 |\omega\rangle$ can be approximated arbitrarily well by elements of $\A_{\omega} |\omega\rangle.$\footnote{Even in the case where the representation of $\A_0$ is unbounded, this approximation can be performed so that the limit converges; using the terminology of section \ref{sec:separating-extension}, we write
\begin{equation}
	a J_{\phi} V b |\omega\rangle 
		= \sqrt{2} J_{\phi} b J_{\phi} a P_{\omega} |\omega\rangle 
		= \sqrt{2} \lim_n J_{\phi} b_n J_{\phi} a P_{\omega} |\omega\rangle
		= \lim_n a J_{\phi} V b_n |\omega\rangle.
\end{equation}}

\subsection{Purity in the factorial case}
\label{sec:non-separating-purity}

We now present a generalization of the arguments given in section \ref{sec:separating-purity}, to establish that $\hat{\omega}$ is pure (in the weak sense) if $\omega$ is factorial.

\begin{lemma}
	On $\H_{\hat{\omega}}$, we will call $(\A_0 \otimes 1)_{\hat{\omega}}, (1 \otimes \A_0^{\text{op}})_{\hat{\omega}},$ and $(\A_0 \otimes \A_0^{\text{op}})_{\hat{\omega}}$ the minimal von Neumann algebras generated by the $*$-algebras in parentheses.
	Under the unitary equivalence of equation \eqref{eq:two-hats}, these are mapped respectively to $\A_{\phi}, \A_{\phi}',$ and $\A_{\phi} \vee \A_{\phi}',$ where $\vee$ is the algebraic union.
\end{lemma}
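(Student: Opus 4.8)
The plan is to replay the separating-case lemma of section~\ref{sec:separating-purity}, now with the von Neumann algebra $\A_\phi$ and the vector $P_\omega|\phi\rangle$ on $\H_\phi$ playing the roles previously played by $\A_\omega$ and $|\omega\rangle$ on $\H_\omega$, and with the $*$-representation of $\A_0$ on $\H_\phi$ built in section~\ref{sec:separating-extension} standing in for the GNS representation of $\A_0$. Write $U$ for the unitary of equation~\eqref{eq:two-hats} (shown unitary in section~\ref{sec:GNS-isometry}). The first step is to establish the two identities
\[
  U(a \otimes 1)U^\dagger = a, \qquad U(1 \otimes a)U^\dagger = J_\phi a^* J_\phi \qquad (a \in \A_0),
\]
where on the right $a$ denotes the closed operator on $\H_\phi$ affiliated with $\A_\phi$ from section~\ref{sec:separating-extension}. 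Both follow by running a generic vector $\sqrt{2}\,c J_\phi d^* P_\omega|\phi\rangle$ through $U^\dagger$, multiplying inside $\A_0 \otimes \A_0^{\text{op}}$ (recalling that multiplication in the second factor is reversed), applying $U$, and simplifying with $J_\phi^2 = 1$, the identity $J_\phi P_\omega|\phi\rangle = P_\omega|\phi\rangle$, and the commutation relation $a J_\phi b P_\omega|\phi\rangle = J_\phi b J_\phi a P_\omega|\phi\rangle$ established for the $\H_\phi$ representation in section~\ref{sec:positivity} (the analogue of appendix~\ref{app:commutant-domain}). The only delicacy is domain bookkeeping: these manipulations must be carried out on the common dense domain $\A_\phi' V \A_0|\omega\rangle$ of section~\ref{sec:separating-extension} and then extended to closures, after which both sides are closed operators agreeing on a core.

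Given these identities, the generated von Neumann algebras are read off as in section~\ref{sec:separating-purity}. From $U(a\otimes 1)U^\dagger = a$ one sees that $U(\A_0 \otimes 1)_{\hat\omega}U^\dagger$ is the minimal von Neumann algebra generated by the $\A_0$ representation on $\H_\phi$, and I claim this equals $\A_\phi$. One inclusion is immediate since, as noted at the end of section~\ref{sec:separating-extension}, each such operator is affiliated with $\A_\phi$. For the reverse inclusion I would use that the $\H_\phi$ operators were defined as limits of the bounded polar truncations $a_n \in \A_\omega$, so that the polar parts of the closure of $a$ on $\H_\phi$ are the images, under the normal representation $\A_\omega \to \A_\phi$, of the polar parts of $a$ on $\H_\omega$; since the latter generate $\A_\omega$, the former generate $\A_\phi$. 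For the second algebra, $U(1 \otimes \A_0^{\text{op}})_{\hat\omega}U^\dagger$ is the minimal von Neumann algebra generated by $J_\phi \A_0 J_\phi$ on $\H_\phi$ (as $a$ ranges over $\A_0$, so does $a^*$). As in section~\ref{sec:separating-purity}, the observation that a bounded $L$ commutes strongly with $J_\phi a J_\phi$ iff $J_\phi L J_\phi$ commutes strongly with $a$, together with the commutant formula $\A_\phi' = \A_0' \cap (\A_0')^\dagger$ for the $\H_\phi$ representation (appendix~\ref{app:vN}), gives $(J_\phi \A_0 J_\phi)' \cap [(J_\phi \A_0 J_\phi)']^\dagger = J_\phi \A_\phi' J_\phi = \A_\phi$, the last step using that $\phi$ is separating; taking a commutant identifies the generated algebra with $\A_\phi'$.

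For the third algebra I would argue exactly as at the end of section~\ref{sec:separating-purity}: it suffices to check $(\A_0 \otimes \A_0^{\text{op}})_{\hat\omega}' = (\A_0 \otimes 1)_{\hat\omega}' \cap (1 \otimes \A_0^{\text{op}})_{\hat\omega}'$, which is immediate because strong commutation is a linear condition, so a bounded operator commutes strongly with $\A_0 \otimes \A_0^{\text{op}}$ iff it commutes strongly with both $\A_0 \otimes 1$ and $1 \otimes \A_0^{\text{op}}$. Conjugating by $U$ then yields $U(\A_0 \otimes \A_0^{\text{op}})_{\hat\omega}U^\dagger = \A_\phi \vee \A_\phi'$, which is the lemma. (The corollary $(\A_0 \otimes \A_0^{\text{op}})_{\hat\omega}' = \comps$ in the factorial case follows just as in section~\ref{sec:separating-purity}, since $\A_\phi \cong \A_\omega$ has trivial center precisely when $\omega$ is factorial, and then $(\A_\phi \vee \A_\phi')' = \A_\phi \cap \A_\phi' = \comps$.)

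I expect the main obstacle to be exactly the identification in the second step of the von Neumann algebra generated by $\A_0$ on $\H_\phi$ with $\A_\phi$: unlike on $\H_\omega$, where $\A_\omega$ is by definition this generated algebra, on $\H_\phi$ the $\A_0$ representation is a derived object obtained by limits of polar truncations, so one must check carefully that the generated algebra is neither smaller nor larger than $\A_\phi$, and that the unbounded-commutant formalism of appendix~\ref{app:vN} transfers verbatim to this representation. An alternative route --- composing the isometric embedding $\H_{\hat\omega} \hookrightarrow \H_{\hat\phi}$ coming from $\hat\omega(a \otimes b) = 2\,\hat\phi(P_\omega a \otimes P_\omega b)$ with the already-established separating lemma applied to $\hat\phi$ --- would bypass that identification but trades it for compression subtleties and for reconciling the $*$-algebra $\A_0 \otimes \A_0^{\text{op}}$ with $\A_\phi \otimes \A_\phi^{\text{op}}$, so I would keep the direct argument.
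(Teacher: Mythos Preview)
Your proposal is correct and follows essentially the same route as the paper's own proof: compute $U(a\otimes 1)U^\dagger = a$ and $U(1\otimes a)U^\dagger = J_\phi a^* J_\phi$ on the dense domain, identify the generated von Neumann algebras with $\A_\phi$ and $\A_\phi'$ via the polar-decomposition/affiliation machinery transported from $\H_\omega$ to $\H_\phi$, and handle the third algebra by the same linear-commutation argument as in section~\ref{sec:separating-purity}. The paper organizes the argument by first dispatching the bounded case and then treating the unbounded case, whereas you treat them uniformly; and the paper resolves your flagged ``main obstacle'' with exactly the observation you suggest, namely that the $\H_\phi$ operators were built from the $\A_\omega$ polar truncations so their bounded polar parts agree with those on $\H_\omega$ and hence generate $\A_\phi$.
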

\begin{proof}
	In the bounded case the proof is very straightforward.
	In this setting the elements of $\A_0$ act in a bounded way on both $\H_{\hat{\omega}}$ and $\H_{\phi},$ and it is clear that the unitary from equation \eqref{eq:two-hats} maps $(\A_0 \otimes 1)$ to $\A_0.$
	So the correspondence between $(\A_0 \otimes 1)_{\hat{\omega}}$ and $\A_{\phi}$ follows by taking limits.
	To see the correspondence between $(1 \otimes \A_0^{\text{op}})$ and $\A_{\phi}'$, we denote the unitary in equation \eqref{eq:two-hats} by $U$, and compute
	\begin{align}
		\begin{split}
		U (1 \otimes a) U^{\dagger} \left[ \sqrt{2} b J_{\phi} c^* P_{\omega} |\phi\rangle \right]
			& = U (1 \otimes a) (b \otimes c) |\hat{\omega}\rangle \\
			& = U (b \otimes ca) |\hat{\omega}\rangle  \\
			& = \sqrt{2} b J_{\phi} a^* c^* P_{\omega} |\phi\rangle \\
			& = J_{\phi} a^* J_{\phi} \left[ \sqrt{2} b J_{\phi} c^* P_{\omega} |\phi\rangle \right].
		\end{split}
	\end{align}
	This gives
	\begin{equation}
		U (1 \otimes a) U^{\dagger}
			= J_{\phi} a^* J_{\phi},
	\end{equation}
	and $U$ conjugates a dense subset of $(1 \otimes \A_0^{\text{op}})_{\hat{\omega}}$ to a dense subset of $\A_{\phi}'.$
	This establishes the second correspondence.
	The final correspondence holds so long as one has
	\begin{equation} \label{eq:vee-lemma}
		(\A_0 \otimes \A_0^{\text{op}})_{\hat{\omega}}
			= (\A_0 \otimes 1)_{\hat{\omega}} \vee (1 \otimes \A_0^{\text{op}})_{\hat{\omega}}.
	\end{equation}
	The inclusion $\supseteq$ is obvious, since $\A_0 \otimes \A_0^{\text{op}}$ contains dense subspaces of both $(\A_0 \otimes 1)_{\hat{\omega}}$ and $(1 \otimes \A_0^{\text{op}})_{\hat{\omega}}.$
	But the right-hand side clearly includes $\A_0 \otimes \A_0^{\text{op}},$ which is a dense subspace of the left-hand side, which establishes the inclusion $\subseteq.$
	
	In the unbounded case, one must be a little more careful, but the argument is basically the one given in section \ref{sec:separating-purity}.
	Under unitary conjugation by $U$, $\A_0 \otimes 1$ is mapped to a $*$-algebra of unbounded operators on $\H_{\phi}$ with common domain equal to the image of $U$.
	Concretely, for any $a, b, c \in \A_0,$ we have
	\begin{equation}
		(U (a \otimes 1) U^{\dagger}) \left[ \sqrt{2} b J_{\phi} c^* P_{\omega} |\phi\rangle \right]
			= a b J_{\phi} c^* P_{\omega} |\phi\rangle.
	\end{equation}
	From this we see that $U$ maps $(a \otimes 1)$ on $\H_{\hat{\omega}}$ to the operator $a$ that acts on $\H_{\phi}$ according to the rules of section \ref{sec:separating-extension}.
	But this operator was constructed explicitly so that the bounded functions of its polar decomposition on $\H_{\phi}$ would match those on $\H_{\omega}.$
	So the von Neumann algebra generated by $U (\A_0 \otimes 1) U^{\dagger}$ on $\H_{\phi}$ is the same as the algebra generated by $\A_{\omega}$ on $\H_{\phi}$; as explained in section \ref{sec:separating-extension}, this is all of $\A_{\phi}.$
	
	With this reasoning in mind, a nearly identical argument to the one given in section \ref{sec:separating-purity} establishes
	\begin{equation}
		U (1 \otimes \A_0^{\text{op}})_{\hat{\omega}} U^{\dagger}
			= \A_{\phi}'.
	\end{equation}
	So it remains only to establish equation \eqref{eq:vee-lemma}, or equivalently its commutant form
	\begin{equation}
		(\A_0 \otimes \A_0^{\text{op}})_{\hat{\omega}}'
		= (\A_0 \otimes 1)_{\hat{\omega}}' \cap (1 \otimes \A_0^{\text{op}})_{\hat{\omega}}'.
	\end{equation}
	But the argument given for this in section \ref{sec:separating-purity} relied in no way on the fact that $\omega$ was separating; so we can simply repeat that argument, and we are done.
\end{proof}

\begin{corollary}
	In the factorial case, we have
	\begin{equation}
		(\A_0 \otimes \A_0^{\text{op}})_{\hat{\omega}}'
			= \comps.
	\end{equation}
\end{corollary}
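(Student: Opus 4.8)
The plan is to run the argument of the corollary in section~\ref{sec:separating-purity} verbatim, with the separating extension $\phi$ of section~\ref{sec:separating-extension} playing the role that $\omega$ played there. The lemma just proved says that the unitary $U$ of equation~\eqref{eq:two-hats} carries $(\A_0 \otimes \A_0^{\text{op}})_{\hat{\omega}}$ onto $\A_{\phi} \vee \A_{\phi}'$ acting on $\H_{\phi}$. Unitary conjugation is a $*$-isomorphism of $\B(\H_{\hat{\omega}})$ onto $\B(\H_{\phi})$ that carries commutants to commutants, so it is equivalent to show that the commutant of $\A_{\phi} \vee \A_{\phi}'$ on $\H_{\phi}$ is trivial.

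First I would compute this commutant. Since $\A_{\phi}$ is a von Neumann algebra (section~\ref{sec:separating-extension}), the identity $(\mathcal{M} \vee \mathcal{N})' = \mathcal{M}' \cap \mathcal{N}'$ together with the bicommutant theorem gives
\begin{equation}
	(\A_{\phi} \vee \A_{\phi}')'
		= \A_{\phi}' \cap (\A_{\phi}')'
		= \A_{\phi}' \cap \A_{\phi},
\end{equation}
which is the center of $\A_{\phi}$. So the corollary reduces to the claim that $\A_{\phi}$ is a factor, and this is where the one genuinely new point enters: a priori $\A_{\phi}$ lives on a different Hilbert space than $\A_{\omega}$, so one must rule out the possibility that passing to $\H_{\phi}$ enlarges the center. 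The resolution is that, by construction in section~\ref{sec:separating-extension}, $\phi$ is a \emph{faithful} and \emph{normal} state on the von Neumann algebra $\A_{\omega}$, so the GNS representation associated with $\phi$ is a normal, faithful $*$-representation, hence a $*$-isomorphism of $\A_{\omega}$ onto its ultraweakly closed image, which is precisely $\A_{\phi}$. Therefore $\A_{\phi} \cong \A_{\omega}$ as von Neumann algebras, and since the center is an isomorphism invariant, the center of $\A_{\phi}$ is isomorphic to that of $\A_{\omega}$.

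Finally, since $\omega$ is factorial by hypothesis, the center $\A_{\omega} \cap \A_{\omega}'$ is $\comps$, hence so is the center $\A_{\phi} \cap \A_{\phi}'$ of $\A_{\phi}$; transporting the chain of equalities back through $U$ yields $(\A_0 \otimes \A_0^{\text{op}})_{\hat{\omega}}' = \comps$. The only step requiring care is the isomorphism $\A_{\phi} \cong \A_{\omega}$, which rests entirely on $\phi$ being simultaneously faithful and normal on $\A_{\omega}$ — the same pair of properties that motivated the construction of $\phi$ in the first place; everything else is the same bookkeeping as in the separating case.
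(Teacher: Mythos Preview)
Your proof is correct and follows essentially the same route as the paper: transport via the lemma to $\A_{\phi} \vee \A_{\phi}'$ on $\H_{\phi}$, identify the commutant as the center $\A_{\phi} \cap \A_{\phi}'$, and invoke the isomorphism $\A_{\phi} \cong \A_{\omega}$ from section~\ref{sec:separating-extension} to conclude factoriality. You spell out the justification for that isomorphism (faithfulness and normality of $\phi$) more explicitly than the paper does, but the argument is the same.
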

\begin{proof}
	By the lemma, it suffices to show that the commutant of $\A_{\phi} \vee \A_{\phi}'$ on $\H_{\phi}$ is trivial.
	But the commutant of $\A_{\phi} \vee \A_{\phi}'$ is the center $\A_{\phi} \cap \A_{\phi}'.$
	Since $\A_{\phi}$ is isomorphic to $\A_{\omega}$, this center is trivial when $\omega$ is factorial.
\end{proof}

\section{Comparison to the natural cone}
\label{sec:comparison}

\subsection{General considerations}

Thus far we have defined and studied, given a state $\omega : \A_0 \to \comps,$ a state $\hat{\omega} : \A_0 \otimes \A_0^{\text{op}} \to \comps$ that we have called the canonical purification.
We call this ``canonical'' because it mimics the time-reflection-sewing character of the finite-dimensional canonical purification reviewed in section \ref{sec:finite-dimension}.
However, there are two other kinds of purifications that have been referred to as ``canonical'' in the literature: (i) the GNS purification, and (ii) the purification into the natural cone.
We already explained, in section \ref{sec:time-reversal-GNS} for the bounded case and in sections \ref{sec:GNS-isomorphism} and \ref{sec:GNS-isometry} for the unbounded case, the connection between GNS and our canonical purification.
Here we address the connection to the natural cone.

The natural cone is reviewed in detail in appendix \ref{app:natural-cone}.
The idea is that given a Hilbert space $\H$ with a von Neumann algebra $\A_0$ and a cyclic-separating state $|\omega\rangle,$ one constructs a set $\mathfrak{P}_{1/4}$ of states such that all cyclic-separating elements of $\mathfrak{P}_{1/4}$ have modular conjugations matching the modular conjugation of $|\omega\rangle.$
Every appropriately continuous functional $\psi$ of $\A_0$ has a unique representative within $\mathfrak{P}_{1/4},$ called its ``natural purification with respect to $|\omega\rangle.$''
Calling this ``canonical'' can be misleading, as if one changes the state $|\omega\rangle,$ then one generically changes the natural purification of $\psi,$ so the natural purification of $\psi$ is defined in reference to a particular cyclic-separating state $|\omega\rangle,$ or at least in reference to the family of states $\mathfrak{P}_{1/4}.$

The connection between the natural purification and reflection-sewing comes from the fact that the vector representative $|\psi_{1/4}\rangle$ is fixed by the modular conjugation $J_{\omega}.$
If $|\omega\rangle$ is the Minkowski vacuum and $\A_0$ is a Rindler wedge, then by the Bisognano-Wichmann theorem \cite{Bisognano-Wichmann}, the modular conjugation $J_{\omega}$ is the CRT reflection map that maps $\A_0$ to the complementary wedge.
Since in this case $|\psi_{1/4}\rangle$ is symmetric under CRT-reflection, one can roughly think of it as two copies of $\psi$ that are sewed together after CRT conjugation.
This interpretation, however, is fairly special to the Minkowski vacuum in the Rindler wedge.
Moreover, natural purifications are only defined for ultraweakly continuous states on von Neumann algebras, while the canonical purification presented in this paper is defined more generally.
Nevertheless, we will now demonstrate that there is an interesting relationship between canonical purifications and natural purifications in settings where both can be defined.

\subsection{Matching canonical to natural purifications}

Suppose we have two states $\omega_1, \omega_2$ defined on the same abstract $*$-algebra $\A_0.$
We will assume they are separating, so that the states $|\hat{\omega}_1\rangle$ and $|\hat{\omega}_2\rangle$ are cyclic and separating, within their respective GNS spaces, for the von Neumann algebras generated by $(\A_0 \otimes 1)$.
Moreover, we suppose there is a unitary equivalence $U$ between the GNS spaces $\H_{\hat{\omega}_1}$ and $\H_{\hat{\omega}_2}$ that preserves the action of $\A_0 \otimes \A_0^{\text{op}}.$
This is a unitary map
\begin{equation} \label{eq:equivalence-U-domains}
	U : \H_{\hat{\omega}_1} \to \H_{\hat{\omega}_2}
\end{equation}
that satisfies
\begin{equation} \label{eq:equivalence-U-action}
	U(a \otimes b) U^{\dagger} = a \otimes b.
\end{equation}
It does not map the state $|\hat{\omega}_1\rangle \in \H_{\hat{\omega}_1}$ to the state $|\hat{\omega}_2\rangle \in \H_{\hat{\omega}_2}$; instead, it maps $|\hat{\omega}_1\rangle$ to some excited state $U |\hat{\omega}_1\rangle.$
We claim that $U |\hat{\omega}_1\rangle$ can always be brought into the natural cone of $|\hat{\omega}_2\rangle$ by acting with a unitary $W$ in the center of $(\A_0 \otimes 1)_{\hat{\omega}_2}.$
We will do this in two steps.
For simplicity of exposition, we start with the factorial case, where $W$ will just be a phase, then give the general proof in the next subsection.

A theorem due to Araki in \cite{Araki:natural-1}, reviewed in appendix \ref{app:natural-cone}, tells us that for $U |\hat{\omega}_1\rangle$ to be in the natural cone of $|\hat{\omega}_2\rangle,$ we must have equality of the modular conjugations:
\begin{equation} \label{eq:sec-6-conj-equality}
	J_{U \hat{\omega}_1} = J_{\hat{\omega}_2}
\end{equation}
together with the inequality
\begin{equation}
	\langle U \hat{\omega}_1 | Z | \hat{\omega}_2\rangle \geq 0
\end{equation}
for every positive $Z$ in the center.
In the factorial case, this second condition is simply
\begin{equation} \label{eq:factorial-inequality}
	\langle U \hat{\omega}_1 | \hat{\omega}_2\rangle \geq 0.
\end{equation}

To start, we observe that because $U$ is unitary, it conjugates $(\A_0 \otimes 1)_{\hat{\omega}_1}$ to $(\A_0 \otimes 1)_{\hat{\omega}_2}.$
Since we assumed that $|\hat{\omega}_1\rangle$ was cyclic and separating for the first of these algebras, the state $U |\hat{\omega}_1\rangle$ is cyclic and separating for the second algebra.
From the defining formula for the Tomita operator it is easy to check
\begin{equation}
	S_{U \hat{\omega}_1} = U S_{\hat{\omega}_1} U^{\dagger},
\end{equation}
and by studying the polar decomposition one finds the modular conjugation formula
\begin{equation}
	J_{U \hat{\omega}_1}
	= U J_{\hat{\omega}_1} U^{\dagger}.
\end{equation}
From this one easily computes
\begin{equation}
	J_{U \hat{\omega}_1} (a \otimes b) U |\hat{\omega}_1\rangle
	= (b^* \otimes a^*) U |\hat{\omega}_1\rangle.
\end{equation}
So in particular, one has
\begin{equation}
	J_{U \hat{\omega}_1} (a \otimes b) J_{U \hat{\omega}_1}
	= J_{\hat{\omega}_2} (a \otimes b) J_{\hat{\omega}_2},
\end{equation}
and by taking limits one sees that for any $L \in (\A_0 \otimes \A_0^{\text{op}})_{\hat{\omega}_2}$ one has
\begin{equation}
	J_{U \hat{\omega}_1} L J_{U \hat{\omega}_1}
	= J_{\hat{\omega}_2} L J_{\hat{\omega}_2}.
\end{equation}
From this we deduce that the unitary operator $J_{\hat{\omega}_2} J_{U \hat{\omega}_1}$ is in the commutant of $(\A_0 \otimes \A_0^{\text{op}})_{\hat{\omega}_2},$ which --- thanks to the structure elaborated in section \ref{sec:separating-purity} --- is the center of $(\A_0 \otimes 1)_{\hat{\omega}_2}.$
In the factorial setting, this means that there is a phase with
\begin{equation}
	J_{U \hat{\omega}_1} = e^{i \theta} J_{\hat{\omega}_2}.
\end{equation}
If we multiply $U$ by $e^{-i \theta/2},$ we find
\begin{equation}
	J_{e^{-i \theta/2} U \hat{\omega}_1}
	= e^{-i \theta/2} J_{U \hat{\omega}_1}e^{i \theta/2}
	= e^{- i \theta} J_{U \hat{\omega}}
	= J_{\hat{\omega}_2}.
\end{equation}
We will now absorb this phase into $U$, which guarantees equality in \eqref{eq:sec-6-conj-equality}.

Next, we must modify $U$ so that equation \eqref{eq:factorial-inequality} holds.
Crucially, however, we must make this modification without changing the modular conjugation $J_{U \hat{\omega}_1}.$
The overlap $\langle U \hat{\omega}_1 | \hat{\omega}_2\rangle$ is a complex number, so we could clearly multiply $U$ by the phase of this complex number to guarantee inequality \eqref{eq:factorial-inequality}, but this will change the modular conjugation unless the phase is real.
Luckily, the phase \textit{is} real; this follows from the manipulation
\begin{equation}
	\langle U \hat{\omega}_1 | \hat{\omega}_2 \rangle
	= \langle J_{\hat{\omega}_2} U \hat{\omega}_1 | \hat{\omega}_2 \rangle
	= \langle J_{\hat{\omega}_2} \hat{\omega}_2 | U \hat{\omega}_1\rangle
	= \langle \hat{\omega}_2 | U \hat{\omega}_1\rangle.
\end{equation}
So this is a positive or negative number; if it is positive we make no change, and if it is negative we change the sign of $U$.
The resulting state $U |\hat{\omega}_1\rangle$ satisfies equation \eqref{eq:sec-6-conj-equality} and inequality \eqref{eq:factorial-inequality}, so it is in the natural cone of $|\hat{\omega}_2\rangle,$ as desired.

\subsection{Canonical/natural equivalence in the non-factorial case}

Now we generalize the proof of the preceding subsection to the non-factorial setting.
As in the factorial case, we want to modify $U$ by a unitary in the center to obtain the equality of modular conjugations
\begin{equation}
	J_{U \hat{\omega}_1} = J_{\hat{\omega}_2}
\end{equation}
and the inequality
\begin{equation} \label{eq:sec-6-2-inequality}
	\langle U \hat{\omega}_1 | Z |\hat{\omega}_2\rangle \geq 0
\end{equation}
for positive $Z$ in the center.
As in the factorial case, we will accomplish this in two steps.
The general setting is more complicated, however, and both steps will require applying the spectral theorem in its general, infinite-dimensional form.
This can be reviewed in \cite[chapters 12 and 13]{Rudin:functional-book}.

First, we follow the initial steps of the factorial proof to conclude that
the operator $J_{\hat{\omega}_2} J_{U \hat{\omega}_1}$ is in the center.
Because this operator is unitary, its spectrum is entirely contained in the complex unit circle.
So if we define the function $f$ on this circle by
\begin{equation}
	f(e^{i \theta}) = e^{i \theta/2}, \qquad \theta \in [0, 2\pi),
\end{equation}
then the operator
\begin{equation}
	W = f(J_{\hat{\omega}_2} J_{U \hat{\omega}_1})
\end{equation}
is a unitary operator in the center that squares to $J_{\hat{\omega}_2} J_{U \hat{\omega}_1}$.
One has
\begin{equation}
	J_{\hat{\omega}_2} (J_{\hat{\omega}_2} J_{U \hat{\omega}_1}) J_{\hat{\omega}_2}
		= J_{U \hat{\omega}_1} J_{\hat{\omega}_2},
\end{equation}
and approximating $f$ by complex polynomials one finds\footnote{The function $f$ is continuous except at $\theta=2\pi,$ so it can be approximated by a uniformly bounded sequence of continuous functions that differ from $f$ only in the interval $\theta \in [2 \pi -1/n, 2 \pi)$. By the Stone-Weierstrass theorem, these continuous functions can in turn be approximated uniformly by polynomials.
Putting this together, one can find a uniformly bounded sequence of polynomials $p_n(z, \bar{z})$ such that each $p_n$ is within $1/n$ distance of $f$ on the interval $\theta \in [0, 2 \pi-(1/n)].$
For any state $|\psi\rangle$, the spectral theorem gives a complex measure $\mu_{\psi, \psi}$ on the circle with
\begin{equation}
	\lVert p_n(J_{U \hat{\omega}_1} J_{\hat{\omega}_2}) |\psi\rangle - W |\psi\rangle \rVert^2
		= \int \mu_{\psi, \psi}(\theta) |p_n(e^{i \theta}) - f(e^{i \theta})|^2,
\end{equation}
and similarly for $p_n^*, W^{\dagger},$ and $f^*.$
The assumptions we have made so far then give the strong-limit formulas
$$W = f(J_{\hat{\omega}_2} J_{U \hat{\omega}_1}) = \lim_n p_n(J_{\hat{\omega}_2} J_{U \hat{\omega}_1}, J_{U \hat{\omega}_1} J_{\hat{\omega}_2})$$
and
$$W^{\dagger} = f^*(J_{\hat{\omega}_2} J_{U \hat{\omega}_1}) = \lim_n p_n^*(J_{\hat{\omega}_2} J_{U \hat{\omega}_1}, J_{U \hat{\omega}_1} J_{\hat{\omega}_2}).$$
Conjugating by $J_{\hat{\omega}_2}$ complex-conjugates the polynomial $p_n,$ which proves the claim.}
\begin{equation} \label{eq:J-W-conjugation}
	 J_{\hat{\omega}_2} W J_{\hat{\omega}_2} 
	 = W^{\dagger}.
\end{equation}

We then have
\begin{equation}
	J_{W U \hat{\omega}_1}
		= W J_{U \hat{\omega}_1} W^{\dagger}
		= W J_{\hat{\omega}_2} W^2 W^{\dagger}
		= W J_{\hat{\omega}_2} W.
\end{equation}
One then uses equation \eqref{eq:J-W-conjugation} to compute
\begin{equation}
	J_{W U \hat{\omega}_1}
		= J_{\hat{\omega}_2} (J_{\hat{\omega}_2} W J_{\hat{\omega}_2}) W = J_{\hat{\omega}_2}.
\end{equation}
We will now absorb $W$ into a redefinition of $U$; this gives us a unitary map $U$ satisfying equations \eqref{eq:equivalence-U-domains} and \eqref{eq:equivalence-U-action}, and with $J_{U \hat{\omega}_1} = J_{\hat{\omega}_2}.$
We now only need to check inequality \eqref{eq:sec-6-2-inequality} on the center.
	
The center, being an abelian von Neumann algebra on a separable Hilbert space, can be generated by bounded functions of a single Hermitian operator $T$ \cite{v1930algebra}.
So what we really want is for any bounded, positive function $p$ on the real line, the inequality
\begin{equation}
	\langle U \hat{\omega}_1 | p(T) | \hat{\omega}_2 \rangle \geq 0.
\end{equation}
Note that the left-hand side is automatically real, since we have already shown that $U |\hat{\omega}_1\rangle$ and $|\hat{\omega}_2\rangle$ have the same modular conjugation; this gives 
\begin{equation}
	\langle U \hat{\omega}_1 | p(T) \hat{\omega}_2 \rangle
		= \langle J_{\hat{\omega}_2} U \hat{\omega}_1 | p(T) \hat{\omega}_2 \rangle
		= \langle J_{\hat{\omega}_2} p(T) \hat{\omega}_2 | U \hat{\omega}_1\rangle
		= \langle p(T) \hat{\omega}_2 | U \hat{\omega}_1\rangle,
\end{equation}
where in the last step we have used that $p(T)$ is in the center and applied the centralizer theorem (cf. appendix \ref{app:centralizer}).

The spectral theorem gives a complex measure $\nu$ on the real line satisfying
\begin{equation}
	\langle U \hat{\omega}_1 | p(T) | \hat{\omega}_2 \rangle
		= \int d\nu(t)\, p(t).
\end{equation}
Because all of the overlaps are real, we actually know that $\nu$ is a real measure. 
Every such measure can be split into positive and negative parts with disjoint support (see e.g. \cite[theorem 6.14]{Rudin:measure-book}).
If we define $\zeta$ to be a function that is $+1$ where $\nu$ is positive and $-1$ where $\nu$ is negative, then we can write $\nu$ in terms of a positive measure $|\nu|$ as
\begin{equation}
	\int d\nu(t)\, p(t)
	= \int d|\nu|(t)\, \zeta(t) p(t).
\end{equation}
If we define $W = \zeta(T),$ then this is a unitary in the center (in fact it is also self-adjoint), and we have
\begin{equation}
	\langle W U \hat{\omega}_1 | p(T) |\hat{\omega}_2\rangle
		= \langle U \hat{\omega}_1 | \zeta^*(T) p(T) |\hat{\omega}_2\rangle 
		= \int d|\nu|(t) \zeta(t) \zeta^*(t) p(t) \geq 0.
\end{equation}
Moreover, this does not change the modular conjugation, since we have
\begin{equation}
	J_{W U \hat{\omega}_1}
		= \zeta(T) J_{U \hat{\omega}_1} \zeta(T)
		= \zeta(T) (J_{U \hat{\omega}_1} \zeta(T) J_{U \hat{\omega}_1}) J_{U \hat{\omega}_1}
		= \zeta(T) (J_{\hat{\omega}_2} \zeta(T) J_{\hat{\omega}_2}) J_{U \hat{\omega}_1}
\end{equation}
and the desired conclusion follows if $\zeta(T)$ is sent to itself by modular conjugation.
But this follows from the fact that $\zeta(T)$ is in the center.
More concretely, for any $L$ in the von Neumann algebra generated by $(\A_0 \otimes 1),$ we have
\begin{align}
	\begin{split}
	(J_{\hat{\omega}_2} \zeta(T) J_{\hat{\omega}_2}) L | \hat{\omega}_2\rangle
		& = J_{\hat{\omega}_2} \zeta(T) (J_{\hat{\omega}_2} L J_{\hat{\omega}_2}) | \hat{\omega}_2\rangle \\
		& = J_{\hat{\omega}_2} (J_{\hat{\omega}_2} L J_{\hat{\omega}_2}) \zeta(T) | \hat{\omega}_2\rangle \\
		& = L J_{\hat{\omega}_2} \zeta(T) | \hat{\omega}_2\rangle \\
		& = L \zeta(T) |\hat{\omega}_2\rangle \\
		& = \zeta(T) L |\hat{\omega}_2\rangle.
	\end{split}
\end{align}
In an intermediate step we have once again used the centralizer theorem to conclude that the state $\zeta(T) |\hat{\omega}_2\rangle$ is fixed by $J_{\hat{\omega}_2}.$

We conclude that $W U |\hat{\omega}_1\rangle$ is in the natural cone of $|\hat{\omega}_2\rangle.$


\acknowledgments{
I thank Jackie Caminiti and Federico Capeccia for careful checking of certain arguments in this paper and for collaboration on related work.
The presentation of this paper was significantly improved by conversations with Jonah Kudler-Flam and Geoff Penington at the Perimeter Institute conference ``QIQG 2025.''}

\appendix 

\section{von Neumann algebras from $*$-algebras}
\label{app:star-algebras}

A general quantum theory does not admit a preferred, intrinsic description in terms of von Neumann algebras.
This is because a single theory may admit many different Hilbert space sectors, in which the local 
degrees of freedom will be completed into distinct von Neumann algebras.
This is a problem already at the level of free field theory in curved spacetime (see for example \cite{Hollands:review}), and has recently been proposed as an important aspect of quantum gravity as experienced by an observer carrying internal degrees of freedom \cite{Witten:background}.

Instead, a quantum theory should be described by a collection of $*$-algebras, potentially with some additional data like an operator product expansion \cite{Hollands:axioms}.
The $*$-algebra is a collection of formal symbols that describe the local degrees of freedom of the theory --- e.g., the local fields of a quantum field theory --- that can be combined by summation or multiplication, and for which one can take adjoints.
Given a set of correlation functions for this algebra, one can perform the GNS construction (reviewed below) to obtain a Hilbert space on which the elements of the $*$-algebra act as operators.
However, because these operators are unbounded, they do not live in a von Neumann algebra, and one cannot obviously ``complete'' the set of unbounded operators in a region to form a von Neumann algebra of local observables.

In this appendix, we give a general construction of von Neumann algebras generated by GNS representations of a $*$-algebra, and prove cyclicity theorems and purity criteria that will be used in the main text.
Our approach is slightly different from existing approaches, because we only try to create a single von Neumann algebra associated with the full GNS representation, not a full net associated to a family of subalgebras.
For existing approaches, which produce a more robust algebraic structure but require additional restrictions on the representation, see \cite{Driessler:unbounded-to-vN, Buchholz:unbounded-to-vN}.\footnote{For experts, we note that the main difference between this paper and \cite{Driessler:unbounded-to-vN, Buchholz:unbounded-to-vN}  is that we define our von Neumann algebra using the intersection of the strong commutant and its adjoint set. This always defines a von Neumann algebra, though it will differ in general from the definition using the weak commutant. However we note that in \cite{Driessler:unbounded-to-vN, Buchholz:unbounded-to-vN} one generally seeks conditions under which in which the weak and strong commutants are equal.
For general technical properties of these commutants, see \cite{Powers:self-adjoint, Borchers:weak-commutant, Schmudgen:commutants}}

\subsection{Key points}

\begin{itemize}
	\item We think of a quantum theory as an abstract $*$-algebra $\A_0$, potentially with additional data, on which one may define states as positive linear functionals $\omega: \A_0 \to \comps.$
	Every such functional gives rise to a GNS representation $\H_{\omega}$ with GNS domain $\{a |\omega\rangle\},$ on which $\A_0$ acts by left multiplication.
	These operators are generally unbounded.
	\item On a GNS Hilbert space $\H_{\omega},$ each $a \in \A_0$ has a minimal ``closure'' $\bar{a}$ admitting a polar decomposition
	\begin{equation}
		\bar{a}
			= V_{\bar{a}} |\bar{a}|.
	\end{equation}
	By abuse of notation, we write $a = \bar{a}.$
	\item A closed operator $T$ is said to be affiliated to a von Neumann algebra $\M$ if $\M$ contains $V_T$ and all bounded functions of $|T|.$
	Equivalently, if every operator in the commutant $\M'$ preserves the domain of $T$ and commutes with $T$ on that domain.
	In this case one says that the operators in $\M'$ ``commute strongly'' with $T$, and writes
	\begin{equation}
		L' T \subseteq L a' \qquad \forall L' \in \M'.
	\end{equation}
	\item We define $\A_{\omega}$ to be the smallest von Neumann algebra on $\H_{\omega}$ such that $a$ is affiliated to $\A_{\omega}$  for every $a \in \A_0.$
	This is the von Neumann algebra generated by all partial isometries $V_{a}$ and bounded functions of self-adjoint operators $|a|.$
	Equivalently, one may define $\A_0'$ as the set of all operators that commute strongly with $\A_0$ in the sense of the preceding bullet point, and define
	\begin{equation}
		\A_{\omega} = (\A_0' \cap (\A_0')^{\dagger})'.
	\end{equation}
	The intersection on the right-hand side is already a von Neumann algebra, so we have
	\begin{equation}
		\A_{\omega}' = \A_0' \cap (\A_0')^{\dagger}.
	\end{equation}
	\item The GNS vector $|\omega\rangle$ is cyclic for $\A_{\omega},$ meaning that vectors of the form $L |\omega\rangle$ with $L \in \A_{\omega}$ are dense in $\H_{\omega}.$
	\item In the case that $|\omega\rangle$ is also separating for $\A_{\omega},$ the vector $J_{\omega} b |\omega\rangle$ is in the domain of every $a \in \A_0,$ and satisfies
	\begin{equation}
		a J_{\omega}b|\omega\rangle = J_{\omega} b J_{\omega} a |\omega\rangle.
	\end{equation}
	\item There are two senses in which a state may be said to be pure: we may say (i) it is pure if $\omega$ cannot be written as a nontrivial mixture of other algebraic states; or (ii) it is pure if $\A_{\omega}$ is the full algebra of bounded operators on $\H_{\omega},$ so that the GNS representation contains no ``purifying system.''
	These notions are the same when the GNS representation of $\A_0$ is bounded, but in general they can be different.
	Condition (i) is equivalent to the statement that the weak commutant of $\A_0$ is trivial, meaning that if we have a bounded $O$ with
	\begin{equation}
		\langle \psi | O a |\chi \rangle = \langle a^* \psi | O |\chi \rangle \qquad \forall\, |\psi\rangle, |\chi\rangle \in \DGNS, \qquad \forall \, a \in \A_0,
	\end{equation}
	then $O$ is a multiple of the identity.
	Condition (ii) is more closely linked to the strong commutant discussed in earlier bullet points.
	It is possible for the weak commutant to be nontrivial while the strong commutant is trivial; see for example \cite{Powers:self-adjoint, Borchers:weak-commutant}.
	This means that one can have a state that is pure in sense (ii) but not in sense (i).
	In the main text we show that canonical purifications in the factorial setting are pure in sense (ii).
\end{itemize}

\subsection{Terminology around unbounded operators}
\label{app:unbounded-operators}

Throughout this appendix and in the main paper it is important to deal with unbounded operators.
This subsection provides a brief overview of the necessary terminology and facts; for a more robust review, see \cite[section 2.3]{Sorce:modular}.

On a Hilbert space $\H$, an unbounded operator $T$ is a linear map
\begin{equation}
	T : D_T \to \H,
\end{equation}
where $D_T$ is a subspace of $\H$ called the \textit{domain} of $T$.
We typically assume that $D_T$ is dense in $\H$, in which case $T$ is said to be \textit{densely defined}.
Any densely defined operator has an adjoint $D_{T^{\dagger}},$ which is defined by the standard equation
\begin{equation}
	\langle T^{\dagger} \chi | \psi \rangle = \langle \chi | T \psi\rangle, \quad |\psi\rangle \in D_T.
\end{equation}
In particular, the domain of $D_{T^{\dagger}}$ is defined to be the set of all vectors $|\chi\rangle$ for which there exists a vector $T^{\dagger} |\chi\rangle$ satisfying the above equation.
By the Riesz lemma, this is equivalent to the statement that
\begin{equation}
	|\psi\rangle \mapsto \langle \chi | T \psi\rangle 
\end{equation}
is bounded as a map from $D_T$ to $\mathbb{C}.$

An operator $T$ is said to be \textit{closed} if all converging sequences converge consistently, i.e., if whenever we have a sequence $|\psi_n\rangle \in D_T$ such that $|\psi_n\rangle$ and $T |\psi_n\rangle$ both converge, then we have 
\begin{equation}
	\left(\lim_{n} |\psi_n\rangle\right) \in D_T
\end{equation}
and
\begin{equation}
	\lim_{n} \left(T |\psi_n\rangle\right)
		= T \left( \lim_n |\psi_n\rangle \right).
\end{equation}
An operator is \textit{closable} (or \textit{preclosed}) if it can be made closed by adding limits of sequences to its domain.
In this case one writes $\bar{T}$ for the closure of $T$.

An extension of an operator $T$ is an operator $S$ with a domain $D_S$ containing $D_T$, such that $S$ and $T$ agree on the shared domain.
One writes $T \subseteq S.$
An operator is closable if it admits an extension which is itself closed.
Adjoints are automatically closed, so one can often show that an operator is closable by showing that it has an extension as an adjoint of another operator.

For any closed operator $T$, the operator $T^{\dagger} T$ is self-adjoint with positive spectrum.
By the spectral theorem, one can take the square root to define $|T| = \sqrt{T^{\dagger} T},$ and one can show $D_{|T|} = D_T.$
From this, one can construct for any closed operator $T$ a polar decomposition
\begin{equation}
	T = V_T |T|,
\end{equation}
where $V_T$ is a partial isometry supported on the closure of the image of $|T|.$

Note that because $|T|$ is self-adjoint, one can use the spectral theorem to act on it with arbitrary bounded functions to create a large class of bounded operators ``made out of $T$.''
A closed operator $T$ is said to be \textit{affiliated} to a von Neumann algebra $\M$ if $\M$ contains both (i) the partial isometry $V_T$, and (ii) every bounded function of $|T|.$
Equivalently, $T$ is affiliated to $\M$ if it formally commutes with every element of the commutant $\M'$: if for each $L' \in \M',$ we have
\begin{equation}
	L' : D_{T} \to D_{T}
\end{equation}
and
\begin{equation}
	T L'|_{D_T} = L' T.
\end{equation}
This is equivalent to the extension formula $L' T \subseteq T L'.$
When this holds one sometimes says that $L'$ commutes ``strongly'' with $T$.

\subsection{GNS representations of a $*$-algebra}
\label{app:GNS}

A $*$-algebra $\A_0$ is an abstract algebra that is not necessarily represented on a Hilbert space, but which nevertheless has operations of multiplication, addition, scalar multiplication, and adjoint.
The multiplication and adjoint operations distribute over addition, and the adjoint operation $*$ satisfies
\begin{align}
	(a b)^*
		& = b^* a^*, \quad a, b \in \A, \\
	(\lambda a)^*
		& = \lambda^* a^*, \quad \lambda \in \mathbb{C}.
\end{align}
The $*$-algebra $\A_0$ is assumed to have an identity element, which is just written as $1,$ and which satisfies $1^* = 1.$

An \textit{algebraic state} on a $*$-algebra is a linear map
\begin{equation}
	\omega : \A_0 \to \comps
\end{equation} 
satisfying the Hermiticity condition
\begin{equation}
	\omega(a)^*
		= \omega(a^*)
\end{equation}
and the positivity condition
\begin{equation}
	\omega(a^* a)
		\geq 0.
\end{equation}
We will always take $\omega$ to be normalized with $\omega(1) = 1.$
The map $\omega$ should be thought of as a list of expectation values for ``operators'' in $\A_0,$ or equivalently as a list of correlation functions.
Given any $\omega,$ one can canonically construct a Hilbert space $\H_{\omega}$ containing (i) a state $|\omega\rangle$; and (ii) a representation of the $*$-algebra $\pi_{\omega}(a)$; such that all correlation functions are reproduced via the formula
\begin{equation}
	\omega(a)
		= \langle \omega | \pi_{\omega}(a) |\omega\rangle.
\end{equation}
We will summarize the construction below; for details and proofs, see \cite[chapter 7]{conway2000course}.\footnote{Note that while this source only explicitly covers the case of $C^*$ algebras, the proofs in the $*$-algebraic case are identical.}

To perform the GNS construction, one simply starts with an abstract complex vector space spanned by the formal symbols
\begin{equation}
	\{ |a \omega\rangle\,|\, a \in \A_0 \}.
\end{equation}
One then imposes by hand the inner product
\begin{equation}
	\langle a \omega | b \omega \rangle
		= \omega(a^* b),
\end{equation}
which can be seen by positivity of $\omega$ to be a ``semi-inner product,'' i.e., an inner product that is positive semidefinite but not necessarily positive definite.
Taking the quotient by null states yields a genuine inner product space, and completing with respect to the inner product gives a Hilbert space $\H_{\omega}.$
The equivalence class $[|a \omega\rangle]$ is often just written $|a \omega\rangle,$ and the $*$-algebra $\A_0$ acts on $\H_{\omega}$ by
\begin{equation}
	\pi_{\omega}(a) |b \omega\rangle = |a b \omega\rangle.
\end{equation}
We will henceforth stop writing the representation $\pi_{\omega},$ since we think omitting it causes no confusion, and will freely manipulate expressions like $a |\omega\rangle.$
Note that $a$ will generally be an unbounded operator; it is defined on the domain spanned by vectors of the form $|b \omega\rangle,$ which are dense in $\H_{\omega}$.
This domain is called the \textit{GNS domain} $\DGNS$.

We now must deal with a serious subtlety in GNS representations of $*$-algebras: we do not generally have the identity $a^* = a^{\dagger}.$
The reason for this is that $a^*$ is the GNS representation of the abstract algebra element $a^*,$ while $a^{\dagger}$ is the adjoint of the GNS representation of the abstract algebra element $a.$
In particular, $a^*$ is only defined on the GNS domain, while $a^{\dagger}$ may have a much bigger domain!
Concretely, we have
\begin{equation}
	\langle a^* b \omega | c \omega\rangle 
		= \langle b \omega | a c \omega\rangle,
\end{equation}
from which we see that the domain of $a^{\dagger}$ includes $\DGNS,$ and that $a^{\dagger}$ and $a^*$ agree on $\DGNS$.
Using the discussion of the previous subsection, we may conclude from this the extension formula
\begin{equation}
	a^* \subseteq a^{\dagger},
\end{equation}
which implies that $a^*$ is a closable operator.
However, it is possible to have
\begin{equation} \label{eq:app-proper-extension}
	\bar{a^*} \subsetneq a^{\dagger},
\end{equation}
and there are physical examples where this inclusion is proper (see e.g. \cite{Rabsztyn:deficiency}).

This subtlety makes the unbounded operators generated by physical degrees of freedom \textit{extremely} complicated to deal with in general.
If expression \eqref{eq:app-proper-extension} were an equality, then one could generate a von Neumann algebra on $\H_{\omega}$ by taking all Hermitian elements of $\A_0$ (those with $a = a^*$), turning them into self-adjoint operators on $\H_{\omega}$ by taking closures, and generating a von Neumann algebra by acting on these self-adjoint operators with bounded functions.
Instead, one must be more careful.

\subsection{von Neumann algebras in a GNS representation}
\label{app:vN}

From the previous subsection, we know that in a GNS representation, we have a Hilbert space $\H_{\omega}$ with a dense ``GNS domain'' $\DGNS$ on which a $*$-algebra is represented.
We have $a^* \subseteq a^{\dagger},$ so each GNS operator is closable.
We would like to use this data to construct a von Neumann algebra, so that we may use the tools of modular theory to study physics.
Inspired by the discussion of subsection \ref{app:unbounded-operators}, we will construct $\A_{\omega}$ such that it is the minimal von Neumann algebra for which all elements of $\A_0$ are affiliated.
Concretely, each element of $\A_0$ gives rise to a closable operator $a,$ and its closure $\bar{a}$ has a polar decomposition
\begin{equation}
	\bar{a} = V_{\bar{a}} |\bar{a}|.
\end{equation}
We take $\A_{\omega}$ to be the von Neumann algebra generated by all partial isometries $V_{\bar{a}}$ and all bounded functions of operators $|\bar{a}|$, for all $a \in \A_0$.\footnote{Given a set $M$ of bounded operators, the von Neumann algebra generated by $M$ is the double commutant of the set $M \cup M^{\dagger}$. This is the smallest von Neumann algebra containing $M.$}
Certainly each $\bar{a}$ is affiliated with $\A_{\omega},$ and it is the smallest von Neumann algebra on $\H_{\omega}$ with this property.

It will occasionally be useful to describe $\A_{\omega}$ in terms of the ``commutator'' definition of affiliation from section \ref{app:unbounded-operators}, instead of the ``polar decomposition'' definition exploited in the preceding paragraph.
We said in section \ref{app:unbounded-operators} that a bounded operator $L'$ commutes strongly with an unbounded, closed operator $T$ if $L'$ maps $D_T$ to itself and satisfies $T L'|_{D_T} = L' T.$
We would like to define $\A_0'$ to be the set of all bounded operators acting on $\H_{\omega}$ that commute strongly with all closures of operators in $\A_0,$ and define $\A_{\omega}$ to be the commutant $(\A_0')'.$
However this does not exactly work as stated, because $\A_0'$ is not necessarily closed under adjoints, which means that $(\A_0')'$ is not necessarily a von Neumann algebra.
There are two ways we could fix this: before taking the commutant, we could replace $\A_0'$ with the \textit{biggest} adjoint-closed set contained in $\A_0',$ or the \textit{smallest} adjoint-closed set containing $\A_0'.$
What we really want is for $\A_{\omega}$ to be the smallest von Neumann algebra for which $\A_0$ is affiliated, which means we want $\A_{\omega}'$ to be the biggest von Neumann algebra contained in $\A_0'.$
So we define
\begin{equation} \label{eq:a-omega-commutant-def}
	\A_{\omega} = (\A_0' \cap (\A_0')^{\dagger})'.
\end{equation}
The utility of this definition as compared to the preceding one is that it allows us to apply the usual logic of von Neumann's double commutant theorem --- to check that something is contained in $\A_{\omega},$ we need only check that it commutes with everything in $\A_0' \cap (\A_0')^{\dagger}.$

To close this subsection, we will show that $\A_0' \cap (\A_0')^{\dagger}$ is itself a von Neumann algebra, and that the two characterizations given above of $\A_{\omega}$ are actually the same.
First note that for a closed operator $T$ that commutes strongly with the bounded operator $L'$, one has for $|\chi\rangle \in D_{T^{\dagger}}$ and $|\psi\rangle \in D_{T}$ the formula
\begin{equation}
	\langle (L')^{\dagger} \chi | T \psi\rangle
		= \langle \chi | L' T \psi\rangle 
		= \langle \chi | T L' \psi \rangle
		= \langle (L')^{\dagger} T^{\dagger} \chi | \psi\rangle.
\end{equation}
This is a bounded function of $|\psi\rangle,$ so by section \ref{app:unbounded-operators} we can conclude that $(L')^{\dagger} |\chi\rangle$ is in $D_{T^{\dagger}}$ and $(L')^{\dagger}$ commutes with $T^{\dagger}$ on that domain.
Flipping the argument around, we see that $L'$ commutes strongly with $T$ if and only if $(L')^{\dagger}$ commutes strongly with $T^{\dagger}.$

Now we are ready to show that $\A_0' \cap (\A_0')^{\dagger}$ is itself a von Neumann algebra, and is equal to $\A_{\omega}'.$
It is clearly closed under adjoints, sums, and finite products, so to establish that it is a von Neumann algebra we need only show that it is closed under strong limits.
This is not so hard; if $L_n'$ is a sequence of operators in $\A_0'$ that converges strongly to $L'$, then for any $a \in \A_0$ and any $|\psi\rangle$ in the domain of $a$ one has 
\begin{equation}
	a L_n' |\psi\rangle = L_n' a |\psi\rangle \to L' a |\psi\rangle,
\end{equation}
and since $a$ is a closed operator this gives $L' |\psi\rangle$ in the domain of $a$ with $[a, L'] |\psi\rangle = 0$; so $L'$ commutes strongly with everything in $\A_0,$ and we conclude that $\A_0'$ is closed under strong limits.
The characterization of $(\A_0')^{\dagger}$ as the set of operators that commute strongly with all adjoints of operators in $\A_0$ establishes, by the same argument we have just given, that $(\A_0')^{\dagger}$ is closed under strong limits, and from this we conclude
\begin{equation}
	\A_{\omega}' = \A_0' \cap (\A_0')^{\dagger}.
\end{equation}

Finally, we show that the characterization of $\A_{\omega}$ from equation \eqref{eq:a-omega-commutant-def} is equivalent to the one written in terms of polar decompositions.
If $L'$ is in $\A_{\omega}',$ then by the above discussion it commutes strongly with both $T$ and $T^{\dagger},$ hence with $T^{\dagger} T.$
The details of the spectral theorem (see e.g. \cite[chapter 13]{Rudin:functional-book}) then imply that they commute with all bounded functions of $|T|.$
Given the polar decomposition
\begin{equation}
	T = V_T |T|,
\end{equation}
one then sees that $L'$ and $(L')^{\dagger}$ must commute with $V_T,$ for we have
\begin{equation}
	V_T L'|T|
		= V_T |T| L'|_{D_{|T|}}
		= T L'|_{D_T}
		= L' T
		= L' V_T |T|.
\end{equation}
So $L'$ and $V_T$ commute on the closure of the image of $|T|$.
As for the orthocomplement of the image of $|T|$, if $|\chi\rangle$ is a vector in this orthocomplement, then we have for each $|\psi\rangle \in D_{|T|}$ the formula
\begin{equation}
	\langle L' \chi | |T| \psi\rangle 
		= \langle \chi | (L')^{\dagger} |T| \psi\rangle
		= \langle \chi | |T| (L')^{\dagger} \psi\rangle
		= 0.
\end{equation}
So $L'$ preserves the orthocomplement of the image of $|T|$, but $V_T$ vanishes on this set, so on this set we have $L' V_T = V_T L' = 0.$
We therefore have $[L', V_T] = 0$, and similar considerations work for the adjoint $[L', V_T^{\dagger}]=0.$
This tells us that $L'$ commutes with $|T|, V_T,$ and $V_T^{\dagger}$, which completes the proof that the ``commutator definition'' of $\A_{\omega}$ is no weaker than the ``polar decomposition'' definition: everything in $\A_0' \cap (A_0')^{\dagger}$ commutes with everything made up of polar decompositions coming from $\A_0,$ so we have 
\begin{equation}
	(\A_0' \cap (\A_0')^{\dagger})'
		\supseteq \text{(minimum vN algebra generated by polar decompositions)}.
\end{equation}

For the converse inclusion, one needs to show that if an operator commutes with the minimum von Neumann algebra generated by polar decompositions, then it is in $\A_0' \cap (\A_0')^{\dagger}.$
This direction is not particularly hard.
Given a closed operator $T$ with polar decomposition
\begin{equation}
	T = V_T |T|,
\end{equation}
one can project onto bounded subsets of the spectrum of $|T|$ to write $T$ as a limit
\begin{equation}
	T = \lim_n V_T |T|_n,
\end{equation}
with each $|T|_n$ bounded, and where this limit converges whenever applied to any vector in the domain of $T$.
Anything that commutes with $V_T$ and $|T|_n$ is then easily shown to commute strongly with $T$.
This tells us that anything commuting with the minimal von Neumann algebra constructed from $V_T$ and bounded functions of $|T|$, necessarily commutes strongly with $T$.
But this minimal von Neumann algebra also necessarily contains $V_T^{\dagger}$ (since von Neumann algebras are closed under adjoints), and this gives a similar proof of strong commutation with $T^{\dagger} = |T| V_T^{\dagger}.$

\subsection{Cyclicity}
\label{app:cyclicity}

By construction, the set
\begin{equation}
	\{a | \omega\rangle \,|\, a \in \A_0\}
\end{equation}
is dense in $\H_{\omega}.$
We will show that the same is true for the set
\begin{equation}
	\{L | \omega\rangle \,|\, L \in \A_\omega\}.
\end{equation}
To see this, it suffices to show that every vector in the first set can be approximated by vectors in the second set.
But using the polar decomposition (and abusing notation by writing $a$ for the closure $\bar{a}$), we can write
\begin{equation}
	a |\omega\rangle = V_{a} |a| |\omega\rangle.
\end{equation}
The operator $|a|$ can be approximated on its domain as a sequence of bounded operators $|a|_n$ obtained by projecting $|a|$ onto the spectral subset $[0, n].$
So we have
\begin{equation}
	a |\omega\rangle = \lim_{n} V_{a} |a|_n |\omega\rangle,
\end{equation}
and the operator $V_{a} |a|_n$ is in $\A_{\omega}.$

\subsection{Domains and the commutant in the cyclic-separating case}
\label{app:commutant-domain}

Let $a$ and $b$ be elements of the $*$-algebra $\A_0,$ treated as closed, unbounded operators acting on the Hilbert space $\H_{\omega}.$
In the cyclic-separating case, $J_{\omega}$ maps $\A_{\omega}$ to $\A_{\omega}'.$
We would like to use this to argue that in the cyclic-separating setting, $J_{\omega} b|\omega\rangle$ is in the domain of $a.$

Since $a$ is a closed operator, it suffices to find a sequence of vectors $|\psi_n\rangle$ approaching $J_{\omega} b|\omega \rangle$ such that each $|\psi_n\rangle$ is in the domain of $a$, and such that the sequence $a |\psi_n\rangle$ also converges.
We will write the polar decomposition of $b$ as $V_b |b|,$ and the projection of $|b|$ onto the $[0,n]$ spectral subspace as $|b|_n.$
If we choose
\begin{equation}
	|\psi_n\rangle = J_{\omega} V_{b} |b|_n |\omega\rangle,
\end{equation}
then we clearly have $|\psi_n\rangle \to J b |\omega\rangle.$
But each $V_b |b|_n$ is in $\A_{\omega},$ so $J_{\omega} V_{b} |b|_n J$ is in $\A_{\omega}'.$
Because $a$ is affiliated to $\A_{\omega},$ it follows that $|\psi_n\rangle$ is in the domain of $a,$ and we have
\begin{equation}
	a |\psi_n\rangle
		= (J_{\omega} V_b |b|_n J_{\omega}) a |\omega\rangle.
\end{equation}
We need to show that this is a Cauchy sequence.
For this, we write
\begin{equation}
	\left\lVert a |\psi_n\rangle - a |\psi_m\rangle \right\rVert^2
		= \Big\langle (J_{\omega} V_b (|b|_n - |b|_m) J_{\omega}) a \omega \Big| (J_{\omega} V_b (|b|_n - |b|_m) J_{\omega}) a \omega \Big\rangle.
\end{equation}
We can move the bounded operators from the bra to the ket, commute it past $a$, then move $a$ from the ket to the bra.
This yields
\begin{equation}
	\left\lVert a |\psi_n\rangle - a|\psi_m\rangle \right\rVert^2
	= \Big\langle a^* a \omega \Big| J_{\omega} (|b|_n - |b|_m)^2  \omega \Big\rangle.
\end{equation}
Without loss of generality, we take $n < m,$ so that we have $|b|_n |b|_m = |b|_n^2.$
This gives
\begin{equation}
	\left\lVert a |\psi_n\rangle - a|\psi_m\rangle \right\rVert^2
	= \Big\langle a^* a \omega \Big| J_{\omega} (|b|_m^2 - |b|_n^2)  \omega \Big\rangle.
\end{equation}
The vector $|\omega\rangle$ is in the domain of $b^{\dagger} b = |b|^2,$ so the spectral theorem tells us that $|b|^2_n|\omega\rangle$ is a Cauchy sequence, and therefore $\lVert a |\psi_n\rangle$ is a Cauchy sequence.
This completes the proof that $J_{\omega} b |\omega\rangle$ is in the domain of $a.$
The calculation also tells us that $J_{\omega} a |\omega\rangle$ is in the domain of $b,$
and gives us the formula
\begin{equation} \label{eq:unbounded-commuting}
	a J_{\omega} b |\omega\rangle
		= J_{\omega} b J_{\omega} a |\omega\rangle.
\end{equation}

\subsection{A purity theorem}
\label{app:purity-theorem}

An algebraic state on a $*$-algebra $\A_0$ is said to be \textit{pure} if it cannot be written nontrivially as a convex sum of distinct states,
\begin{equation} \label{eq:app-mixed-formula}
	\omega = p \omega_1 + (1-p) \omega_2, \qquad 0 < p < 1, \quad \omega_j \neq \omega.
\end{equation}
For the purposes of this appendix, we will say that a state with this property is pure in sense (i).
This is a good mathematical definition, and it accounts for the notion of a state being mixed if it can be represented as a probabilistic mixture, but it is missing something of the information-theoretic understanding of purity: we would really like to say that a state is pure if it cannot be nontrivially entangled with a reference system.
For this, we introduce a second notion of purity: we say that $\omega$ is pure in sense (ii) if and only if, in the GNS representation, the von Neumann algebra $\A_{\omega}$ is equal to the full algebra of bounded operators on $\H_{\omega}$.
When $\omega$ is pure in sense (ii), the GNS construction introduces no new degrees of freedom to purify $\omega.$

Type (i) purity and type (ii) purity are equivalent for GNS representations of C$^*$-algebras; this was proved by Segal in \cite{Segal:irreducible}.
However, it turns out that they are not equivalent for general GNS representations of $*$-algebras.
Here we show --- as in e.g. \cite[theorem 6.3]{Powers:self-adjoint} --- that a state is pure in sense (i) if and only if its \textit{weak commutant} is trivial.
The weak commutant is generally larger than the set we called $\A_0'$ in section \ref{app:vN}, which is often called the strong commutant \cite{Powers:self-adjoint, Borchers:weak-commutant}.\footnote{See also \cite[p.101]{Streater:book} for a discussion of commutativity and reducibility in the context of the Wightman axioms.}
So if a state is pure in sense (i) then it is pure in sense (ii), but the converse is not necessarily true.

We said that for a $*$-algebra $\A_0$ in a GNS representation $\H_{\omega}$, the bounded operator $L'$ is in the strong commutant if we have
\begin{equation}
	L' a \subseteq a L' \qquad \forall\, a \in \A_0.
\end{equation}
We say it is in the weak commutant if instead we only have
\begin{equation}
	L' a \subseteq (a^*)^{\dagger} L' \qquad \forall\, a \in \A_0.
\end{equation}
Because $(a^*)^{\dagger}$ may be a nontrivial extension of $a,$ these two definitions are not equivalent.
One may also say that $L'$ is in the weak commutant if we have
\begin{equation}
	\langle b \omega | L' | a c \omega\rangle
		= \langle a^* b \omega | L' | c \omega \rangle \qquad \forall\, a,b,c\in\A_0.
\end{equation}

Now, suppose that $\omega$ is mixed in sense (i), and has a decomposition of the form \eqref{eq:app-mixed-formula}.
The map $\psi \equiv p \omega_1$ is a positive linear functional on $\A_0$ that is bounded above by $\omega.$
From this (and the Cauchy-Schwarz inequality for $\psi$) one can see that with respect to the GNS representation, we have
\begin{equation}
	|\psi(a^* b)|
		\leq \sqrt{\psi(a^* a)} \sqrt{\psi(b^* b)}
		\leq \sqrt{\omega(a^* a)} \sqrt{\omega(b^* b)}
		= \lVert a |\omega\rangle \rVert \lVert b |\omega \rangle \rVert.
\end{equation}
So the map
\begin{equation}
	\left(a |\omega\rangle, b |\omega\rangle \right) \mapsto \psi(a^* b)
\end{equation}
is linear in $b$ and antilinear in $a,$ and is bounded in terms of the $|\omega\rangle$ inner product.
By the Riesz lemma, there exists a bounded operator $\hat{\psi}$ on $\H_{\omega}$ with matrix elements
\begin{equation}
	\langle a \omega | \hat{\psi} | b \omega\rangle = \psi(a^* b).
\end{equation}
This is clearly a positive operator, and is bounded above by one.
Because we have $p \omega_1 \neq \omega,$ we also have $\hat{\psi} \neq 1.$
But we have
\begin{equation}
	\langle b \omega | \hat{\psi} | a c \omega \rangle
		= \psi(b^* a c) = \psi((a^* b)^* c) = \langle a^* b \omega | \hat{\psi} | c \omega \rangle,
\end{equation}
so $\hat{\psi}$ is a nontrivial operator in the weak commutant of $\A_0.$

For the converse, suppose that the weak commutant of $\A_0$ is nontrivial.
Let $L'$ be a nontrivial, bounded operator that commutes weakly with $\A_0.$
It is not hard to show that $(L')^{\dagger} L'$ is a positive operator that commutes weakly with $\A_0,$ and by rescaling it we can find an operator in the weak commutant that is bounded between $0$ and $1.$
Call this operator $\hat{\psi},$ and define
\begin{equation}
	\psi(a)
		= \langle \omega | \hat{\psi} | a \omega\rangle.
\end{equation}
It is easy to show that this is a positive, linear functional on $\A_0$ that is bounded above by $\omega,$ so the decomposition
\begin{equation}
	\omega = \psi + (1-\psi)
\end{equation}
implies that $\omega$ is mixed in sense (i).

\section{Some lesser-known aspects of modular theory}

In the main text, we need to employ a few aspects of modular theory that are well known to experts but not as widely diffused as the basic theory.
These are (i) the construction of modular operators for non-separating states, (ii) the centralizer theorem, and (iii) purification via the natural cone.
This appendix reviews these notions in a manner that is less complete than in textbook accounts, but which does include some details of the relevant proofs.

For other references on point (i) see \cite[appendix C]{Araki:Lp} or \cite[appendix A]{Ceyhan:QNEC}.
For other references on point (ii) see \cite[chapter VIII.2]{Takeaski:II} or the interesting recent application \cite{deBoer:berry}.
For other references on point (iii) see the original papers \cite{Araki:natural-1, Araki:natural-2} or the textbook account \cite[sections 10.23-10.25]{Stratila:book}.
Background on cyclic-separating modular theory can be found in \cite{Witten:notes, Sorce:modular}.

As for the previous appendix, because the material is technical, we begin with a bulleted summary of key points.

\subsection{Key points}

\begin{itemize}
	\item If a state $|\omega\rangle$ and von Neumann algebra $\A$ are obtained from a GNS representation, then $|\omega\rangle$ is always cyclic, but it may not be separating.
	\item For a non-separating state $|\omega\rangle,$ the projection $P_{\omega}$ that maps onto the closure of $\A' |\omega\rangle$ is a nontrivial operator in $\A$.
	One defines the Tomita operator $S_{\omega}$ as the closure of
	\begin{equation}
		S_{\omega} a |\omega\rangle = P_{\omega} a^{\dagger} |\omega\rangle.
	\end{equation}
	Its polar decomposition is $J_{\omega} \Delta_{\omega}^{1/2}.$
	The usual KMS relation generalizes to
	\begin{equation}
			\langle \Delta_{\omega}^{1/2} a \omega | \Delta_{\omega}^{1/2} b \omega \rangle
			= \langle b^{\dagger} \omega | P_{\omega} | a^{\dagger} \omega\rangle.
		\end{equation}
	The modular conjugation $J_{\omega}$ is an antilinear partial isometry whose support and image are both equal to the support of $P_{\omega}.$
	It satisfies
	\begin{equation}
		J_{\omega}^2 = P_{\omega}.
	\end{equation}
	The kernel of $\Delta_{\omega}^{1/2}$ is the orthocomplement of the support of $P_{\omega}.$
	One still has the usual formula
	\begin{equation}
		J_{\omega} \Delta_{\omega} J_{\omega} = \Delta_{\omega}^{-1}
	\end{equation}
	provided that one defines the inverse to be zero on the kernel of $\Delta_{\omega}.$
	\item In the cyclic-but-not-separating case, we do not have the usual relation $J_{\omega} \A J_{\omega} = \A'.$
	Instead we have
	\begin{equation}
		J_{\omega} \A J_{\omega} = P_{\omega} \A' P_{\omega}.
	\end{equation}
	\item Given a cyclic-separating state $|\omega\rangle$ for a von Neumann algebra $\A,$ an operator $a \in \A$ is said to be in the \textit{centralizer} if it is left fixed by the modular flow.
	This turns out to be the same as the identity
	\begin{equation}
		\langle \omega | [a, b] |\omega\rangle = 0 \qquad \forall\, b \in \A.
	\end{equation}
	\item
	For any $a$ in the centralizer, one has
	\begin{equation}
		\Delta_{\omega}^{z} a |\omega\rangle
			= a |\omega\rangle
	\end{equation}
	for any complex power $z$ of the modular operator.
	\item Given a cyclic-separating state $|\omega\rangle,$ there is a special set of vectors in $\H$ called the \textit{natural cone},
	\begin{equation}
		\mathfrak{P}_{1/4}
			= \bar{\{ \Delta_{\omega}^{1/4} p |\omega\rangle \,|\, p \in \A, p \text{ positive}\}}.
	\end{equation}
	Every ultraweakly continuous functional of $\A$ has a unique vector representative in $\mathfrak{P}_{1/4}.$
	\item To check that a cyclic-separating vector $|\psi\rangle$ is in $\mathfrak{P}_{1/4},$ it suffices to check that its modular conjugation agrees with the modular conjugation of $|\omega\rangle,$ and that one has
	\begin{equation}
		\langle \psi | Z | \omega \rangle \geq 0
	\end{equation}
	for any positive $Z$ in the center of $\A$.
\end{itemize}

\subsection{Non-separating modular theory}
\label{app:modular}

Suppose we have a Hilbert space $\H$ with a von Neumann algebra $\A$, and a vector $|\omega\rangle$ that is cyclic for $\A$.
This means that the vectors
\begin{equation}
	\{a |\omega\rangle\,|\, a \in \A\}
\end{equation}
form a dense subspace of $\H$.
As discussed in appendix \ref{app:cyclicity}, this is always true for the GNS vector and the von Neumann algebra induced by a state on a general $*$-algebra.
One says that $|\omega\rangle$ is separating for $\A$ if it is cyclic for the commutant algebra $\A'.$
However, this is not generically true in the GNS setting --- it is true only if the algebraic state $\omega$ is faithful in the sense that
\begin{equation}
	(\omega(a^* a) = 0) \quad \Rightarrow \quad (a=0).
\end{equation}
In the language of finite-dimensional quantum mechanics, a state fails to be separating if its density matrix is not full rank.
We do not want to exclude ourselves from constructing canonical purifications for such states, so we must use modular theory in the generic, separating case.\footnote{Modular theory can also be generalized to the non-cyclic case; for a nice account see \cite[appendix A]{Ceyhan:QNEC}.}
We will freely make use of notions involving unbounded operators, so the reader may wish to consult appendix \ref{app:unbounded-operators} before proceeding.

When a state $|\omega\rangle$ is cyclic but not separating for the von Neumann algebra $\A$, the subspace $\A'|\omega\rangle$ is not dense in Hilbert space.
We define
\begin{equation}
	P_{\omega}
		\equiv P_{\bar{\A'|\omega\rangle}}
\end{equation}
to be the orthogonal projection onto the closure.
This operator $P_{\omega}$ can be shown to live in $\A.$\footnote{By definition of $P_{\omega}$, for $a' \in \A'$ one has $a' P_{\omega} = P_{\omega} a' P_{\omega}.$ This gives
$$
	a' P_{\omega} = P_{\omega} a' P_{\omega} = (P_{\omega} (a')^{\dagger} P_{\omega})^{\dagger} = ((a')^{\dagger} P_{\omega})^{\dagger} = P_{\omega} a',
$$
and $P_{\omega}$ is in $\A$ by the double commutant theorem.}

One defines the antilinear ``pre-Tomita operator'' $S_{\omega, 0}$ on the domain $\A |\omega\rangle$ via the formula
\begin{equation}
	S_{\omega, 0} a |\omega\rangle
		= P_{\omega} a^{\dagger} |\omega\rangle.
\end{equation}
One also defines the operator $F_{\omega, 0}$ to be zero on the complement of $P_{\omega} \H,$ and to act on $\A' |\omega\rangle$ as
\begin{equation}
	F_{\omega, 0} a' |\omega\rangle
		= (a')^{\dagger} |\omega\rangle.
\end{equation}
One easily checks that for $|\psi\rangle$ in the domain of $F_{\omega, 0},$ we have
\begin{equation}
	\langle a \omega | F_{\omega, 0} \psi \rangle = \langle \psi | S_{\omega, 0} a \omega \rangle.
\end{equation}
For antilinear operators, this is the appropriate equation to show that $F_{\omega, 0}^{\dagger}$ is an extension of $S_{\omega, 0},$ so $S_{\omega, 0}$ is closed.
Its closure is called $S_{\omega}$, and its polar decomposition is written
\begin{equation}
	S_{\omega}
		= J_{\omega} \Delta_{\omega}^{1/2}.
\end{equation}
The antilinear partial isometry $J_{\omega}$ is called the modular conjugation, and the positive operator $\Delta_{\omega}$ is called the modular operator.
From this decomposition one can compute
\begin{equation}
	\langle \Delta_{\omega}^{1/2} a \omega | \Delta_{\omega}^{1/2} b \omega \rangle
		= \langle b^{\dagger} \omega | P_{\omega} | a^{\dagger} \omega\rangle,
\end{equation}
which generalizes the usual KMS relation to the non-separating setting.

Both $J_{\omega}$ and $\Delta_{\omega}^{1/2}$ will generally have a kernel.
The kernel of $\Delta_{\omega}^{1/2}$ is the same as the kernel of $S_{\omega},$ and this is easily seen to be the orthocomplement $(P_{\omega} \H)^{\perp}$ via the manipulations
\begin{align}
	\begin{split}
		0 = S_{\omega} a |\omega\rangle \quad
		& \Leftrightarrow \quad 0 = P_{\omega} a^{\dagger} |\omega\rangle \\
		& \Leftrightarrow \quad \langle a' \omega | a^{\dagger} \omega \rangle = 0 \quad \forall\, a' \in \A' \\
		& \Leftrightarrow \quad \langle a \omega | (a')^{\dagger} \omega \rangle = 0 \quad \forall\, a' \in \A' \\
		& \Leftrightarrow \quad \langle a \omega | a' \omega \rangle = 0 \quad \forall\, a' \in \A'.
	\end{split}
\end{align}
To find the kernel of $J_{\omega},$ we note that the partial isometry is defined so that its image is the closure of the image of $S_{\omega},$ and its support is the closure of the image of $\Delta_{\omega}^{1/2}.$
Since $\Delta_{\omega}^{1/2}$ is self-adjoint, the closure of its image is the orthocomplement of its kernel.
So we have
\begin{equation}
	\im{J_{\omega}} = \bar{\A'|\omega\rangle} = P_{\omega} \H
\end{equation}
and
\begin{equation}
	\ker{J_{\omega}} = \ker{\Delta_{\omega}^{1/2}} = (P_{\omega} \H)^{\perp}
\end{equation}
So $J_{\omega}$ is a partial isometry from $P_{\omega} \H$ onto itself.
This gives $J_{\omega} J_{\omega}^{\dagger} = J_{\omega}^{\dagger} J_{\omega} = P_{\omega}$

To recover the usual identity $J_{\omega}^{\dagger} = J_{\omega},$ we note that because $P_{\omega}$ is in $\A$, we have
\begin{equation}
	S_{\omega} S_{\omega} a |\omega\rangle 
		= S_{\omega} P_{\omega} a^{\dagger} |\omega\rangle
		= P_{\omega} a P_{\omega} |\omega\rangle 
		= P_{\omega} a |\omega\rangle.
\end{equation}
So on its domain, we have $S_{\omega}^2 = P_{\omega}.$
This gives the formula
\begin{equation}
	P_{\omega}|_{D_{S_{\omega}}}
		= J_{\omega} \Delta_{\omega}^{1/2} J_{\omega} \Delta_{\omega}^{1/2}.
\end{equation}
So the right-hand side equals zero on the part of the domain of $S_{\omega}$ that is in the kernel, and equals the identity on the complement.
If we choose to define $\Delta_{\omega}^{-1/2}$ by setting it equal to zero on the kernel of $\Delta_{\omega}^{1/2}$ and inverting it normally elsewhere, then the above formula gives
\begin{equation}
	J_{\omega} \Delta_{\omega}^{1/2} J_{\omega} \supseteq \Delta_{\omega}^{-1/2}.
\end{equation}
To show equality, one must show that anything in the domain of the left is in the domain of the right.
In other words, anything in the domain of $J_{\omega} \Delta_{\omega}^{1/2} J_{\omega}$ must be in the image of $\Delta_{\omega}^{1/2}.$
For any such vector $|\psi\rangle,$ the vector $J_{\omega} |\psi\rangle$ is in the domain of $\Delta_{\omega}^{1/2},$ i.e., it is in the domain of $S_{\omega}.$
However, $J_{\omega} |\psi\rangle$ is \textit{also} in the image of $P_{\omega}.$
Since it is in the domain of $P_{\omega}$ and in the image of $S_{\omega},$ the identity $S_{\omega}^2 \subseteq P_{\omega}$ implies that $J_{\omega}|\psi\rangle$ is in the image of $S_{\omega}.$ So we can write
\begin{equation}
	J_{\omega} |\psi\rangle = S_{\omega} |\chi\rangle = J_{\omega} \Delta_{\omega}^{1/2} |\chi\rangle,
\end{equation}
and left-multiplying by $J_{\omega}^{\dagger}$ gives
\begin{equation}
	|\psi\rangle = P_{\omega} \Delta_{\omega}^{1/2} |\chi\rangle.
\end{equation}
But we already showed that the image of $\Delta_{\omega}^{1/2}$ is left fixed by $P_{\omega},$ so we have completed the proof of equality in
\begin{equation}
	J_{\omega} \Delta_{\omega}^{1/2} J_{\omega} = \Delta_{\omega}^{-1/2}.
\end{equation}
One may then write
\begin{equation}
	\Delta_{\omega}^{-1/2}
		= J_{\omega} \Delta_{\omega}^{1/2} J_{\omega} = J_{\omega} J_{\omega} J_{\omega}^{\dagger} \Delta_{\omega}^{1/2} J_{\omega},
\end{equation}
and uniqueness of the polar decomposition implies that $J_{\omega}^2$ must be equal to the projector onto the closure of the image of $\Delta_{\omega}^{-1/2},$ which gives
\begin{equation}
	J_{\omega}^2 = P_{\omega},
\end{equation}
hence $J_{\omega} = J_{\omega}^{\dagger}.$

Readers may be familiar with the fact that in the cyclic-separating case, conjugating by $J_{\omega}$ maps $\A$ to $\A'.$
This is not true in the non-separating case.
The culprit is the incomplete support of $J_{\omega}.$
The general formula is
\begin{equation} \label{eq:general-conjugation-algebra}
	J_{\omega} \A J_{\omega} = P_{\omega} \A' P_{\omega},
\end{equation}
To prove this, one can use the ``separating trick'' from section \ref{sec:separating-extension} to construct an auxiliary space $\H_{\phi}$ carrying a von Neumann algebra $\A_{\phi}$ isomorphic to $\A,$ with a state $|\phi\rangle$ for which $\A_{\phi}$ is cyclic and separating, and that has a subspace isometric to $\H$ via the map
\begin{equation}
	V : a |\omega\rangle \mapsto \sqrt{2} a P_{\omega} |\phi\rangle.
\end{equation}
From equation \eqref{eq:separating-mod-op}, one has
\begin{equation}
	J_{\omega} = V^{\dagger} P_{\omega} J_{\phi} P_{\omega} V,
\end{equation}
hence
\begin{equation} \label{eq:long-modular-conj}
	J_{\omega} a J_{\omega}
		= (V^{\dagger} P_{\omega} J_{\phi} P_{\omega} V) a (V^{\dagger} P_{\omega} J_{\phi} P_{\omega} V)
\end{equation}
To simplify this expression, one uses that the projector $P_{\phi}',$ which projects onto the image of $V$, is in $\A_{\phi}'.$
From this one can deduce $V a V^{\dagger} = P_{\phi}' a P_{\phi}'$, since one only needs to check matrix elements in states in the image of $V$, and one finds
\begin{equation}
	\langle V b \omega | P_{\phi}' a P_{\phi}' |V c \omega \rangle
		= 2 \langle b P_{\omega} \phi | P_{\phi}' a P_{\phi}' | c P_{\omega} \phi\rangle
		= 2 \langle b P_{\omega} P_{\phi}' \phi | a | c P_{\omega} \phi\rangle
\end{equation}
and then one can use (from section \ref{sec:separating-extension}) the identity $P_{\omega} P_{\phi}' |\phi\rangle = P_{\omega} |\phi\rangle$ to deduce
\begin{equation}
	\langle V b \omega | P_{\phi}' a P_{\phi}' |V c \omega \rangle
	= 2 \langle b P_{\omega} \phi | a | c P_{\omega} \phi\rangle
	= \langle V b \omega | V a c \omega \rangle
	= \langle V b \omega | V a V^{\dagger} | V c \omega\rangle. 
\end{equation}
Plugging $V a V^{\dagger} = P_{\phi}' a P_{\phi}'$ into equation \eqref{eq:long-modular-conj}, one finds
\begin{equation}
	J_{\omega} a J_{\omega}
	= V^{\dagger} P_{\omega} (J_{\phi} a J_{\phi}) P_{\omega} V,
\end{equation}
where we have simplified using $P_{\phi}' J_{\phi} = J_{\phi} P_{\omega}$ and $P_{\phi}' V = V.$
Finally, one can use the definition of $V$ to deduce $P_{\omega} V = V P_{\omega}$, which gives
\begin{equation}
	J_{\omega} a J_{\omega}
	\in P_{\omega} V^{\dagger} \A_{\phi}' V P_{\omega}.
\end{equation}
For any $a' \in \A_{\phi}'$ it is straightforward to check $V^{\dagger} a' V \in \A'$:
\begin{align}
	\begin{split}
	\langle c_1 \omega| (V^{\dagger} a' V) b | c_2 \omega\rangle
		& = 2 \langle c_1 P_{\omega} \phi | a' | b c_2 P_{\omega} \phi \rangle \\
		& = 2 \langle b^{\dagger} c_1 P_{\omega} \phi | a' | c_2 P_{\omega} \phi\rangle \\
		& = \langle V b^{\dagger} c_1 \omega | a' V c_2 \omega \rangle \\
		& = \langle c_1 \omega | b (V^{\dagger} a' V) |c_2 \omega\rangle.
	\end{split} 
\end{align}
This completes the proof of equation \eqref{eq:general-conjugation-algebra}.

\subsection{The centralizer theorem}
\label{app:centralizer}

In many (but not all) physical contexts, one consider von Neumann algebras that are factors, $\A \cap \A' = \comps1.$
When this equation is satisfied, one says that the center is trivial.
Given a state $|\omega\rangle,$ one can consider the set of operators that are ``effectively in the center as far as $|\omega\rangle$ is concerned.''
This is the set
\begin{equation}
	\Z_{\omega}
		= \{a \in \A \,|\, \langle \omega | [a, b] | \omega\rangle = 0 \text{ for all } b \in \A\}.
\end{equation}
One can also consider the modular flow $\Delta_{\omega}^{it}$ and study the set of operators that are fixed under this flow:
\begin{equation}
	\mathcal{C}_{\omega}
		= \{ a \in \A \,|\, \Delta_{\omega}^{-it} a \Delta_{\omega}^{it} = a \text{ for all } t \in \mathbb{R}\}.
\end{equation}
This set is called the centralizer of $|\omega\rangle$ in $\A.$
A very useful theorem says that in the case of a cyclic and separating state, these two notions are the same: $\Z_{\omega} = \mathcal{C}_{\omega}.$\footnote{We do not know whether the identity $\Z_{\omega} = \mathcal{C}_{\omega}$ holds outside of the cyclic-separating case. We expect it does, but we have been unable to find any discussion of this in the literature. In any case, in the main text we only apply this theorem to the cyclic-separating case.}
We will sketch the proof below.
A corollary we will use in the main text is that whenever we have $a \in \Z_{\omega},$ the vector $a |\omega\rangle$ is in the domain of every power of $\Delta_{\omega},$ and satisfies $\Delta_{\omega}^{z} a |\omega\rangle = a |\omega\rangle.$
This follows from the fact that when $a$ is in $\mathcal{C}_\omega,$ the function
\begin{equation}
	t \mapsto \Delta_{\omega}^{it} a |\omega\rangle
\end{equation}
is constant for real $t,$ so it admits a constant analytic continuation to arbitrary values of $t.$
For the connection between analytic continuations and the domains of positive operators, see \cite[section 2.4]{Sorce:modular}.

The proof of the relationship between $\Z_{\omega}$ and $\mathcal{C}_{\omega}$ works by studying analytic continuations of the modular flow.
As discussed in the preceding paragraph, when $a$ is in $\mathcal{C}_\omega,$ we have $\Delta_{\omega}^{z} a |\omega\rangle = a |\omega\rangle$ for any $z,$ which gives in particular
\begin{equation}
	\langle \omega | b a |\omega\rangle
		= \langle \omega | b \Delta_{\omega} a |\omega\rangle
		= \langle \omega | a b |\omega\rangle,
\end{equation}
where in the last equality we have used the KMS relation --- see for example \cite{Sorce:modular}.
This implies $\mathcal{C}_\omega \subseteq \Z_{\omega}.$

For the converse inclusion, one uses the fact that there is a dense set of states $b |\omega\rangle$ for which $\Delta_{\omega}^{z} b^{\dagger} |\omega\rangle$ is an analytic function --- see for example \cite[theorem 10.20]{Stratila:book} --- and writes
\begin{equation}
	\alpha(t)
		= \langle \omega | b \Delta_{\omega}^{-it} a |\omega\rangle
		= \langle \Delta_{\omega}^{\bar{z}} b^{\dagger} \omega | a \omega\rangle|_{z = -it}.
\end{equation}
Analytically continuing to $z=1-it$, one finds
\begin{equation}
	\alpha(t + i)
	= \langle \Delta_{\omega}^{1+it} b^{\dagger} \omega | a \omega\rangle
	= \langle \omega | a \Delta_{\omega}^{it} b | \omega \rangle,
\end{equation}
where again we have used the KMS relation.
Now using the assumption $a \in \Z_{\omega},$ we find
\begin{equation}
	\alpha(t + i)
	= \langle \omega | a \Delta_{\omega}^{it} b \Delta_{\omega}^{-it} | \omega \rangle
	= \langle \omega | \Delta_{\omega}^{it} b \Delta_{\omega}^{-it} a | \omega \rangle
	= \alpha(t).
\end{equation}
So the analytic function $\alpha(t)$ is periodic in imaginary time; it is not hard to show that it is bounded, so it follows that $\alpha(t)$ is constant.
This gives a dense set of states for which the overlap with $\Delta_{\omega}^{-it} a |\omega\rangle$ is the same as the overlap with $a |\omega\rangle$; hence we have
\begin{equation}
	\Delta_{\omega}^{-it} a |\omega\rangle
		= a |\omega\rangle.
\end{equation}
To get from this to the conclusion $a \in \mathcal{C}_{\omega}$, we use the fact that modular flow sends $\A$ to itself to write, for any $a' \in \A',$ the formula
\begin{equation}
	(\Delta_{\omega}^{-it} a \Delta_{\omega}^{it}) a' |\omega\rangle
			= a' (\Delta_{\omega}^{-it} a \Delta_{\omega}^{it}) |\omega\rangle
			= a' a |\omega\rangle = a a' |\omega\rangle.
\end{equation}
Since $\A' |\omega\rangle$ is dense, this proves $\Delta_{\omega}^{-it} a \Delta_{\omega}^{it} = a,$ as desired.

\subsection{Purification via the natural cone}
\label{app:natural-cone}

Suppose we have a von Neumann algebra $\A$ with a cyclic-separating state $|\omega\rangle.$
Suppose we have a density matrix $\rho$ on $\H$ that induces a set of expectation values on $\A$ via the formula\footnote{Equivalently, we may start with the linear functional and demand only that it is positive and ``ultraweakly continuous,'' since this is equivalent to the existence of a density matrix representative; see \cite[theorem 54.10]{conway2000course}.}
\begin{equation}
	\rho(a) = \tr(\rho a).
\end{equation}
Is there a vector state in $\H$ reproducing these expectation values?
I.e., is there a state $|\rho\rangle$ with
\begin{equation}
	\langle \rho | a | \rho\rangle = \rho(a) \qquad \forall\, a \in \A?
\end{equation}
The answer is always yes when $\A$ possesses a separating vector --- see for example \cite[corollary 5.24]{Stratila:book}.
But there is a huge ambiguity in how this state should be chosen, since one can act on $|\rho\rangle$ with any unitary $U'$ in $\A'$ and get a purification that is just as good.
In \cite{Araki:natural-1, Araki:natural-2}, Araki introduced a family of subsets of $\H$ such that in each of these subsets, every global density matrix has a single, unique representative.
These subsets are built off of the cyclic-separating state $|\omega\rangle$, so the resulting purifications inherit some of their properties from $|\omega\rangle.$

Concretely, given a von Neumann algebra $\A$, one writes $\A_+$ for the set of positive elements --- this is the set of operators $a$ with
\begin{equation}
	\langle \psi | a | \psi\rangle \geq 0 \qquad \forall\, |\psi\rangle \in \H.
\end{equation}
Equivalently, it is the set of elements of $\A$ that can be written as $a = b^{\dagger} b$ for some other $b \in \A.$
For any $\alpha \in [0, 1/2],$ Araki defines the set
\begin{equation}
	\mathfrak{P}_{\alpha}
		= \bar{\{\Delta_{\omega}^{\alpha} p |\omega\rangle \,|\, p \in \A_+\}}.
\end{equation}
This is a \textit{cone} --- it is preserved under addition and under multiplication by nonnegative numbers.

There are several interesting relations between these cones.
The first is that if one defines a set of cones with respect to the commutant algebra,
\begin{equation}
	\mathfrak{P}'_{\alpha}
	= \bar{\{(\Delta_{\omega}')^{\alpha} p' |\omega\rangle \,|\, p' \in \A'_+\}},
\end{equation}
then one has
\begin{equation}
	\mathfrak{P}_{\alpha} = \mathfrak{P}'_{1/2-\alpha}.
\end{equation}
To see this, one fixes a vector $|\psi\rangle$ in $\mathfrak{P}_{\alpha}$ and a set of vectors $\Delta_{\omega}^{\alpha} p_n |\omega\rangle$ converging to $|\psi\rangle.$
Using standard identities for the modular operator, one computes
\begin{align}
	\begin{split} 
	|\psi\rangle
		& \lim_n \Delta_{\omega}^{\alpha} p_n |\omega\rangle \\
		& = \lim_n \Delta_{\omega}^{\alpha - 1/2} \Delta_\omega^{1/2} p_n |\omega\rangle \\
		& = \lim_n \Delta_{\omega}^{\alpha - 1/2} J_{\omega} S_{\omega} p_n |\omega\rangle \\
		& = \lim_n \Delta_{\omega}^{\alpha - 1/2} J_{\omega} p_n |\omega\rangle \\
		& = \lim_n (\Delta'_{\omega})^{1/2-\alpha} (J_{\omega} p_n J_{\omega}) |\omega\rangle \\
	\end{split}
\end{align}
where in the last step we have used $\Delta_{\omega}' = \Delta_{\omega}^{-1}.$
This sequence of equations, together with the fact that $J_{\omega}$ conjugates $\A$ to $\A',$ gives $\mathfrak{P}_{\alpha} \subseteq \mathfrak{P}'_{1/2-\alpha}.$
The reverse inclusion follows by similar logic.

Another interesting relation between the cones is that $\mathfrak{P}_{\alpha}$ is the dual cone to $\mathfrak{P}_{1/2-\alpha}.$
The ``dual cone'' of a cone $\mathfrak{C}$ is the set $\mathfrak{C}^*$ of all vectors that have nonnegative overlap with vectors in $\mathfrak{C}.$
For $|\psi\rangle \in \mathfrak{P}_{\alpha}$ and $|\eta \rangle \in \mathfrak{P}_{1/2-\alpha},$ the results of the paragraph guarantee the existence of positive sequences $p_n \in \A_+$ and $p'_n \in \A'_+$ with
\begin{equation}
	|\psi\rangle = \lim_{n} \Delta_{\omega}^{\alpha} p_n |\omega\rangle
\end{equation}
and
\begin{equation}
	|\eta\rangle = \lim_{n} (\Delta_{\omega})^{\alpha} p'_n |\omega\rangle.
\end{equation}
Taking the overlap $\langle \psi| \eta\rangle$ and using $\Delta_{\omega}' = \Delta_{\omega}^{-1},$ one finds positivity via the relation
\begin{equation}
	\langle p_n \omega | p'_n \omega \rangle = \langle \sqrt{p_n} \omega | p'_n | \sqrt{p_n} \omega\rangle \geq 0.
\end{equation}
This calculation shows that $\mathfrak{P}_{1/2 - \alpha}$ is contained in the dual cone of $\mathfrak{P}_{\alpha},$ but it does not show equality of these sets.
The argument for equality is a little complicated, so we omit it.
See the textbook account \cite[chapter 10]{Stratila:book} for a readable account in the special cases $\alpha=0$ and $\alpha=1/4$; the proofs generalize in a straightforward way to all $\alpha \in [0, 1/2].$

The observation that $\mathfrak{P}_{\alpha}$ is dual to $\mathfrak{P}_{1/2-\alpha}$ immediately tell us that $\mathfrak{P}_{1/4}$ is special --- it is ``self-dual.''
Indeed, it is often called \textit{the} self-dual cone or the ``natural cone'' of $\A$ with respect to $|\omega\rangle.$
One can show that for any density matrix $\rho$ on $\H$ --- equivalently, for any ultraweakly continuous linear functional on $\A$ --- there exists a unique vector $|\rho_{1/4}\rangle \in \mathfrak{P}_{1/4}$ reproducing all the expectation values of $\rho$ on $\A$.
Moreover, there is a nice continuity relationship between $\rho$ and $|\rho_{1/4}\rangle$ in the form of the inequality
\begin{equation}
	\lVert |\rho_{1/4}\rangle - |\sigma_{1/4}\rangle \rVert^2
		\leq \lVert \rho - \sigma \rVert_1 \leq \lVert |\rho_{1/4}\rangle - |\sigma_{1/4}\rangle \rVert \times \lVert |\rho_{1/4}\rangle + |\sigma_{1/4}\rangle \rVert,
\end{equation}
with
\begin{equation}
	\lVert \rho - \sigma \rVert_1
		= \sup_{a \in \A, \lVert a \rVert_{\infty} \leq 1} |\rho(a) - \sigma(a)|.
\end{equation}
This makes it very convenient to study an algebraic state by passing to this representative in the natural cone.
For this reason, some sources have called $|\rho_{1/4}\rangle$ the ``canonical purification'' of $\rho$ with respect to $|\omega\rangle$.
For reasons explained in the main text, we think this is somewhat misleading terminology --- in particular, the canonical purification of a state should be defined without reference to another state.
However, in the main text we do show that there is a connection between the notions of canonical purification and natural purification in certain settings.
To accomplish this, we will use  one last fact: for every cyclic-separating state in the natural cone, the modular conjugation $J_{\rho_{1/4}}$ is the same as the modular conjugation $J_{\omega}.$
Moreover, it was proved by Araki (\cite[theorem 4-5]{Araki:natural-1}) that if $|\psi\rangle$ is a cyclic-separating state with $J_{\psi} = J_{\omega}$ and one has
\begin{equation}
	\langle \psi | Z | \omega \rangle \geq 0
\end{equation}
for every positive $Z$ in the center of $\A$, then $|\psi\rangle$ is in the natural cone of $|\omega\rangle.$

\section{The necessity of time-reversal}
\label{app:time-reversal}

In the main text, our canonical purification functional $\hat{\omega}$ was defined on the algebra $\A_0 \otimes \A_0^{\text{op}}.$
However, the formula
\begin{equation}
	\hat{\omega}(a \otimes b)
	= \langle a^{*} \omega | J_{\omega} |b^{*} \omega\rangle 
\end{equation}
could equally well be defined on $\A_0 \otimes \A_0$.
To emphasize the importance of time reversal in canonical purification, we present here a simple example where $\hat{\omega}$ is not positive on $\A_0 \otimes \A_0.$

Let $\H$ be a finite-dimensional Hilbert space, let $\A_0$ be the set of linear operators $\B(\H)$, and let $\omega$ be a state induced by a density matrix $\rho.$
In section \ref{sec:finite-dimension}, we computed the canonical purification formula as
\begin{equation}
	\hat{\omega}(a \otimes b)
	= \langle \sqrt{\rho} | (a \otimes b) |\sqrt{\rho}\rangle.
\end{equation}
If we define this as a functional on $\A_0 \otimes \A_0,$ then we have
\begin{equation} \label{eq:negative-sum}
	\sum_{j, k} \hat{\omega}((a_j^* \otimes b_j^*) (a_k \otimes b_k))
	= \sum_{j, k} \langle \sqrt{\rho} | (a_j^*  a_k \otimes b_j^* b_k) |\sqrt{\rho}\rangle.
\end{equation}
But the representation of $\B(\H) \otimes \B(\H)$ on $\H \otimes \H^*$ has a reversed multiplication rule in the second factor, so this can be rewritten as
\begin{equation}
	\sum_{j, k} \hat{\omega}((a_j^* \otimes b_j^*) (a_k \otimes b_k))
	= \sum_{j, k} \langle \sqrt{\rho} | (a_j^* \otimes b_k) (a_k \otimes b_j^*) |\sqrt{\rho}\rangle.
\end{equation}
We can see already why we should not expect this to be positive; this is not the expectation value of a positive operator on $\H \otimes \H^*$!

A particular example of negativity is furnished by $\H = \comps^2$, $\rho$ maximally mixed, and
\begin{align}
	a_1  = \begin{pmatrix}1 & 0 \\ 0 & 0 \end{pmatrix},
	&& a_2 = \begin{pmatrix} 0 & 0 \\ 0 & 1 \end{pmatrix},
	&& a_3 = \begin{pmatrix} 0 & -1 \\ 0 & 0 \end{pmatrix},
	&& a_4 = \begin{pmatrix} 0 & 0 \\ -1 & 0 \end{pmatrix}, \\
	b_1  = \begin{pmatrix}0 & 0 \\ 0 & 1 \end{pmatrix},
	&& b_2 = \begin{pmatrix} 1 & 0 \\ 0 & 0 \end{pmatrix},
	&& b_3 = \begin{pmatrix} 0 & 0 \\ 1 & 0 \end{pmatrix},
	&& b_4 = \begin{pmatrix} 0 & 1 \\ 0 & 0 \end{pmatrix},
\end{align}
for which one can compute equation \eqref{eq:negative-sum} and find the negative answer $(-2).$

\bibliographystyle{JHEP}
\bibliography{bibliography}

\end{document}